\newcommand{\comment}[1]{}
\newenvironment{proof}{\paragraph{\bf Proof:}}{\hspace*{\fill}\(\Box\)}
\newenvironment{proofSketch}{\paragraph{\bf Proof Sketch:}}{\hspace*{\fill}\(\Box\)}
\newtheorem{theorem}{Theorem}
\newtheorem{definition}{Definition}
\newtheorem{lemma}{Lemma}
\def\noflash#1{\setbox0=\hbox{#1}\hbox to 1\wd0{\hfill}}
\newcommand{\scriptf}{\mathcal{F}}
\newcommand{\scripte}{\mathcal{E}}
\newcommand{\scriptv}{\mathcal{V}}
\newcommand{\scriptl}{\mathcal{L}}
\newcommand{\scripts}{\mathcal{S}}
\newcommand{\matrixm}{\textbf{M}}
\newcommand{\matrixh}{\textbf{H}}
\newcommand{\graphh}{\textit{H}}
\begin{document}
\title{Iterative Approximate Byzantine Consensus\\ under a {\em Generalized} Fault Model\footnote{\normalsize This research is supported in part by National Science Foundation award CNS 1059540
and Army Research Office grant W-911-NF-0710287. Any opinions, findings,
and conclusions or recommendations expressed here are those of the authors and do not
necessarily reflect the views of the funding agencies or the U.S. government.}}

\author{Lewis Tseng$^{1,3}$, and Nitin Vaidya$^{2,3}$\\~\\
 \normalsize $^1$ Department of Computer Science,\\
 \normalsize $^2$ Department of Electrical and Computer Engineering, 
 and\\ \normalsize $^3$ Coordinated Science Laboratory\\ \normalsize University of Illinois at Urbana-Champaign\\ \normalsize Email: \{ltseng3, nhv\}@illinois.edu~\\~\\Technical Report}

\date{May 21, 2012}
\maketitle

\newcommand{\bfA}{{\bf A}}
\newcommand{\bfH}{{\bf H}}
\newcommand{\bfQ}{{\bf Q}}
\newcommand{\bfM}{{\bf M}}
\newcommand{\bfv}{v}

\begin{abstract}
In this work, we consider
a {\em generalized} fault model that can be used to represent a wide range of failure scenarios, including correlated failures and non-uniform node reliabilities. This fault model is general in the sense that fault models studied in prior related work, such as $f$-total and $f$-local models, are special cases of the generalized fault model. Under the generalized fault model, we explore iterative approximate Byzantine consensus (IABC) algorithms in arbitrary directed networks. We prove a necessary and sufficient condition for the existence of IABC algorithms.
The use of the generalized fault model helps to gain a better understanding 
of IABC algorithms.

\end{abstract}

\newpage

\section{Introduction}
\label{sec:intro}

Dolev et al. \cite{AA_Dolev_1986} introduced the notion of
{\em approximate Byzantine consensus} by relaxing the requirement
of {\em exact} consensus \cite{AA_nancy}.
The goal in approximate consensus is to allow the fault-free nodes to
agree on values that are approximately equal to each other (and {not necessarily}
exactly identical). 
In presence of Byzantine faults, while {\em exact} consensus 
is impossible in {\em asynchronous} systems \cite{FLP_one_crash}, approximate
consensus is achievable \cite{AA_Dolev_1986}.
The notion of approximate consensus is of interest in {\em synchronous}
systems as well, since approximate consensus can be achieved using
distributed algorithms that do {not} require complete knowledge of
the network topology \cite{AA_convergence_markov}.
The rest of the discussion in this paper assumes 
a {\em synchronous} systems.

The fault model assumed in much of the work on Byzantine consensus allows up to $f$
Byzantine faulty nodes in the network. We will refer to this fault model as the ``$f$-total''
fault model \cite{IBA_broadcast_Sundaram, psl_BG_1982, AA_Dolev_1986, AA_nancy}. In prior work, other fault models have been explored as well.
For instance, in the ``$f$-local'' fault model, up to $f$ neighbors of {\em each}
node in the network may be faulty \cite{Koo_radio_byzantine, Vartika_radio_byzantine_2005, IBA_broadcast_Sundaram}, and in the
$f$-fraction model \cite{IBA_broadcast_Sundaram}, up to $f$ fraction of the neighbors of each node may be faulty.
In this paper, we consider a {\em generalized} fault model (to be described in the next
section).
The {\em generalized fault} model specifies a ``fault domain'', which is
a collection of feasible fault sets (a similar fault model is recently presented in \cite{Nonuniform_failure_models}). For example, in a system consisting of four nodes, namely,
nodes $1, 2, 3$ and $4$, the fault domain could be specified as $\scriptf = \{\,\{1\},\, \{2,3,4\}\,\}$.
Thus, in this case, either node 1 may be faulty, or any subset of nodes in $\{2,3,4\}$ may be faulty. However,
node 1 may not be faulty simultaneously with another node.
The new fault model is general in the sense that the other fault models studied in the literature,
such as $f$-total, $f$-local and $f$-fraction models, are special cases of the generalized fault model.

Analysis of consensus under the generalized fault model offers
some new insights into how the choice of the fault model affects
algorithm design. 
In particular,
we consider ``iterative'' algorithms
for achieving approximate Byzantine consensus in synchronous point-to-point
networks that are modeled by arbitrary {\em directed}\, graphs. The {\em iterative
approximate Byzantine consensus} (IABC) algorithms of interest have
the following properties, which we will soon state more formally:
\begin{itemize}
\item {\em Initial state} of each node is equal to a real-valued
{\em input} provided to that node.
\item {\em Validity} condition: After each iteration of an IABC algorithm, the state of each fault-free node
must remain in the {\em convex hull} of the states of the fault-free nodes
at the end of the {\em previous} iteration.
\item {\em Convergence} condition:
For any $\epsilon>0$, after a sufficiently large number of iterations,
the states of the fault-free nodes are guaranteed to be within $\epsilon$
of each other.
\end{itemize}

This paper is a generalization of our recent work on
IABC algorithms under the $f$-total fault
model \cite{vaidya_PODC12,vaidya_matrix_IABC}. 
The contributions of this paper are as follows:
\begin{itemize}
\item We identify a necessary condition on the communication graph for the
existence of a correct IABC algorithm under the {\em generalized} fault model
(Sections \ref{sec:iabc} and \ref{s_necessity}).
\item We introduce a new IABC algorithm for the generalized fault model
(Section \ref{s_algo}) that uses  only ``local'' information.
\item A transition matrix representation of the new IABC algorithm is
presented (Section \ref{s_sufficiency}). This representation is then used
to prove the correctness of the proposed algorithm (Section \ref{s_proof}).
\end{itemize}
Since the results here generalize our prior results \cite{vaidya_PODC12,vaidya_matrix_IABC}, naturally the proof techniques used here have some similarities
to the prior work. The material in Section \ref{s_proof} bears the
strongest similarity to our prior work.
The rest of the paper, however, presents results that provide new intuition
on the problem of approximate consensus. In particular, materials in Sections \ref{s_necessity} and \ref{s_algo} shed light on how the fault model influences the design of IABC algorithms.

\section{Models}
\label{s_models}

\paragraph{Communication Model:}

The system is assumed to be {\em synchronous}.
The communication network is modeled as a simple {\em directed} graph $G(\scriptv,\scripte)$, where $\scriptv=\{1,\dots,n\}$ is the set of $n$ nodes, and $\scripte$ is the set of directed edges between the nodes in $\scriptv$.
 We assume that $n\geq 2$, since the consensus problem for $n=1$ is trivial.
 Node $i$ can reliably transmit messages to node $j$ if and only if
the directed edge $(i,j)$ is in $\scripte$.
Each node can send messages to itself as well, however,
for convenience, we {exclude self-loops} from set $\scripte$.
That is, $(i,i)\not\in\scripte$ for $i\in\scriptv$.
With a slight abuse of terminology, we will use the terms {\em edge}
and {\em link} interchangeably in our presentation.

For each node $i$, let $N_i^-$ be the set of nodes from which $i$ has incoming
edges.
That is, $N_i^- = \{\, j ~|~ (j,i)\in \scripte\, \}$.
Similarly, define $N_i^+$ as the set of nodes to which node $i$
has outgoing edges. That is, $N_i^+ = \{\, j ~|~ (i,j)\in \scripte\, \}$. Nodes in $N_i^-$ and $N_i^+$ are, respectively, said to be incoming and outgoing neighbors of node $i$.
Since we exclude self-loops from $\scripte$,
$i\not\in N_i^-$ and $i\not\in N_i^+$. 
However, we note again that each node can indeed send messages to itself. 

\paragraph{Generalized Byzantine Failure Model:}

We consider the Byzantine failure model, with possible faulty nodes specified using a ``fault domain'' $\scriptf$ (defined below). A faulty node may {\em misbehave} arbitrarily. Possible misbehavior includes transmitting incorrect and mismatching (or inconsistent) messages to different neighbors. The faulty nodes may collaborate with each other.  Moreover, the faulty nodes are assumed to have a complete knowledge of the execution of the algorithm, including the states of all the nodes, the algorithm specification, and the network topology.

The generalized fault model is characterized using {\em fault domain}
$\scriptf \subseteq 2^{\scriptv}$ as follows:
Nodes in set $F$
may fail during an execution of the algorithm only if there exists set $F^*\in \scriptf$
such that $F\subseteq F^*$.
Set $F$ is then said to be a {\em feasible} fault set.
\begin{definition}
Set $F\subseteq \scriptv$ is said to be a \,\underline{feasible}\, fault set, if there
exists $F^*\in\scriptf$ such that $F\subseteq F^*$.
\end{definition}
Thus, each set in $\scriptf$ specifies nodes that may
all potentially fail during a single execution of the algorithm (a similar fault model is also considered in \cite{Nonuniform_failure_models}). This feature
can be used to capture the notion of correlated failures.
 For example, consider a system consisting
of four nodes, namely, nodes 1, 2, 3, and 4. Suppose that
\[ \scriptf = \{ \, \{1\}, \{2\}, \{3,4\}\, \}\]
This definition of $\scriptf$ implies that during an execution either (i) node 1 may
fail, or (ii) node 2 may fail, or (iii) any subset of $\{3,4\}$ may fail, and no
other combination of nodes may fail (e.g., nodes 1 and 3 cannot both fail in a single execution). In this case, the reason that the set
$\{3, 4\}$ is in the fault domain may be that the failures of nodes
3 and 4 are correlated.

The generalized fault model is also useful to capture variations in node reliability. For instance, in the above example,
nodes 1 and 2 may be more reliable than nodes 3 and 4. Therefore, while
simultaneous failure of nodes 3 and 4 may occur, simultaneous failure of nodes
1 and 2 is less likely. Therefore, $\{1,2\}\not\in\scriptf$.

\noindent {\em Local knowledge of $\scriptf$:}
To implement our IABC Algorithm presented in Section \ref{s_algo}, it is
sufficient for 
each node $i$ to know $N_i^-\cap F$, for each feasible fault set $F$.
In other words, each node only needs to know the set of its incoming neighbors that may fail simultaneously. Thus, the iterative algorithm can be implemented using only ``local'' information  regarding $\scriptf$.

\section{Iterative Approximate Byzantine Consensus (IABC) Algorithms}
\label{sec:iabc}

In this section, we describe the structure of the IABC algorithms of interest,
and state the validity and convergence conditions that they must satisfy.

Each node $i$ maintains state $v_i$, with $v_i[t]$ denoting the state
of node $i$ at the {\em end}\, of the $t$-th iteration of the algorithm.
Initial state of node $i$,
$v_i[0]$, is equal to the initial {\em input}\, provided to node $i$.
At the {\em start} of the $t$-th iteration ($t>0$), the state of
node $i$ is $v_i[t-1]$.
The IABC algorithms of interest will require each node $i$
to perform the following three steps in iteration $t$ where $t>0$.
Note that the faulty nodes may deviate from this specification. 

\begin{enumerate}
\item {\em Transmit step:} Transmit current state, namely $v_i[t-1]$, on all outgoing edges and self-loop
 (to nodes in $N_i^+$ and node $i$ itself).

\item {\em Receive step:} Receive values on all incoming edges and self-loop (from nodes in $N_i^-$ and itself). 
Denote by $r_i[t]$ the vector of values received by node $i$ from its
incoming neighbors and itself. The size of vector $r_i[t]$ is $|N_i^-|+1$.

\item {\em Update step:} Node $i$ updates its state using a transition function $Z_i$ as
follows. $Z_i$ is a part of the specification of the algorithm, and takes
 the vector $r_i[t]$ as the input.
\begin{eqnarray}
v_i[t] & = &  Z_i ~( ~r_i[t]~)
\label{eq:Z_i}
\end{eqnarray}


\end{enumerate}
\comment{++++++++++++++++
We now define $U[t]$ and $\mu[t]$, assuming that $F \in \scriptf$
is the set of Byzantine faulty nodes, with the nodes
in $\scriptv-F$ being fault-free.\footnote{\normalsize For sets $X$ and $Y$, $X-Y$ contains elements that are in $X$ but not in $Y$. That is, $X-Y=\{i~|~ i\in X,~i\not\in Y\}$.} 
\begin{itemize}

\item $U[t] = \max_{i\in\scriptv-F}\,v_i[t]$. $U[t]$ is the largest state among the fault-free nodes at the end of the $t$-th iteration.
Since the initial state of each node is equal to its input,
$U[0]$ is equal to the maximum value of the initial input at the fault-free nodes.

\item $\mu[t] = \min_{i\in\scriptv-F}\,v_i[t]$. $\mu[t]$ is the smallest state among the fault-free nodes at the end of the $t$-th iteration.
$\mu[0]$ is equal to the minimum value of the initial input at the
fault-free nodes.
\end{itemize}
++++++++++++++++++}

\noindent
The following conditions must be satisfied by an IABC algorithm when
the set of faulty nodes (in a given execution) is $F$:

\begin{itemize}
\item {\em Validity:} $\forall t>0$, and all fault-free nodes $i\in\scriptv-F$,\\
\hspace*{0.5in} $v_i[t] \geq \min_{j\in\scriptv-F}~ v_j[t-1]$
~\mbox{~~and~~}~
~~ $v_i[t] \leq \max_{j\in\scriptv-F}~ v_j[t-1]$.\footnote{For sets $X$ and $Y$, $X-Y$ contains elements
that are in $X$ but not in $Y$. That is,
$X-Y=\{i~|~ i\in X,~i\not\in Y\}$.
}

\item {\em Convergence:} for all {\em fault-free} nodes $i,j\in \scriptv-F$,~~ $\lim_{\,t\rightarrow\infty} ~ (v_i[t]-v_j[t]) = 0$
\end{itemize}

An IABC algorithm is said to be {\em correct} if it satisfies the above validity and convergence conditions in the given graph $G(\scriptv, \scripte)$. For a given fault domain $\scriptf$
for graph $G(\scriptv,\scripte)$, the objective here is to identify the necessary and sufficient conditions for the existence of a {\em correct} IABC algorithm.

\section{Necessary Condition}
\label{s_necessity}

In this section, we develop a necessary condition for the existence of a correct IABC algorithm. The necessary condition will be proved to be also sufficient in Section  \ref{s_sufficiency}.


\subsection{Preliminaries}

To facilitate the statement of the necessary condition, we first
 introduce the notions of ``source component'' and ``reduced graph'' using the following three definitions. 

\begin{definition}
\label{def:decompose}
{\bf Graph Decomposition:}
Let $H$ be a directed graph. Partition graph $H$ into strongly connected components, $H_1,H_2,\cdots,H_h$, where $h$ is a non-zero integer dependent on graph $H$, such that

\begin{itemize}
\item every pair of nodes {\bf within} the same strongly connected component has directed paths in $H$ to each other, and

\item for each pair of nodes, say $i$ and $j$, that belong to two {\bf different} strongly connected components, either $i$ does not have a directed path to $j$ in $H$, or $j$ does not have a directed path to $i$ in $H$.
\end{itemize}
Construct a graph $H^d$ wherein each strongly connected component $H_k$ above is represented by vertex $c_k$, and there is an edge from vertex $c_k$ to vertex $c_l$ if and only if the nodes in $H_k$ have directed paths in $H$ to the nodes in $H_l$. $H^d$ is called the \underline{decomposition graph} of $H$.
\end{definition}

\noindent
It is known that for any directed graph $H$, the corresponding decomposition graph $H^d$ is a directed {\em acyclic} graph (DAG) \cite{dag_decomposition}.

\begin{definition}
{\bf Source Component}:
Let $H$ be a directed graph, and let $H^d$ be its decomposition graph as per
Definition~\ref{def:decompose}.  Strongly connected component $H_k$ of $H$ is said to be a {\em source component} if the corresponding vertex $c_k$ in $H^d$ is \underline{not} reachable from any other vertex in $H^d$.
\end{definition}


\begin{definition}
\label{def:reduced} {\bf Reduced Graph:}
For a given graph $G(\scriptv,\scripte)$ and a feasible fault set $F$, a \underline{reduced graph} $G_F(\scriptv_F,\scripte_F)$ is obtained as follows:
\begin{itemize}
\item Node set is obtained as $\scriptv_F = \scriptv-F$.
\item For each node $i\in\scriptv_F$, a feasible fault set $F_x(i)$ is chosen,
	and then the edge set $\scripte_F$ is obtained as follows:
\begin{itemize} 
\item remove from $\scripte$ all the links incident on the nodes in $F$, and
\item for each $i \in \scriptv_F$ and each $j\in F_x(i)\cap \scriptv_F \cap N_i^-$,
remove link $(j, i)$ from $\scripte$.
\end{itemize}
Feasible fault sets $F_x(i)$ and $F_x(j)$ chosen for $i\neq j$ may or may not be identical.
\end{itemize}
\end{definition}

Note that for a given $G(\scriptv, \scripte)$ and a given $F$, multiple reduced graphs $G_F$ may exist, depending on the choice of $F_x$ sets above.


\subsection{Necessary Condition}
\label{ss_necessity}

For a correct IABC algorithm to exist, the network graph $G(\scriptv, \scripte)$ must satisfy the necessary condition stated in Theorem \ref{thm:nc} below.

\begin{theorem}
\label{thm:nc}
Suppose that a correct IABC algorithm exists for $G(\scriptv, \scripte)$. Then, any
reduced graph $G_F$, corresponding to any feasible fault set $F$,
must contain exactly one {\em source component}.
\end{theorem}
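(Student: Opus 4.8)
The plan is to prove the statement by reducing it to an ``at most one'' claim: since the decomposition graph of any reduced graph is a DAG (as noted after Definition~\ref{def:decompose}), a nonempty $G_F$ always has \emph{at least} one source component, so it suffices to show that the existence of a correct IABC algorithm forces \emph{at most} one. I would argue by contradiction, assuming that some reduced graph $G_F$ (for a feasible fault set $F$ and a fixed choice of the sets $F_x(i)$) has two distinct source components $S_1$ and $S_2$. Being distinct source components of the DAG, $S_1$ and $S_2$ are nonempty, disjoint, contained in $\scriptv_F$, and have \emph{no} incoming edges in $G_F$ from outside themselves. The goal is to build a single execution of the allegedly correct algorithm in which every node of $S_1$ is frozen at value $0$ and every node of $S_2$ is frozen at value $1$ for all iterations; since $S_1,S_2\subseteq\scriptv_F$ are fault-free, this violates the convergence condition and yields the contradiction.

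The structural fact driving everything is a consequence of ``$S_1$ is a source component'': for every $i\in S_1$, each incoming neighbor $j\in N_i^-$ with $j\in\scriptv_F\setminus S_1$ must satisfy $j\in F_x(i)$, because the edge $(j,i)$ survives the removal of $F$ yet is absent from $G_F$, so it was deleted by the second bullet of Definition~\ref{def:reduced}. I would then define the main execution $E$ to have actual fault set $F$, with initial values $0$ on $S_1$, $1$ on $S_2$, and arbitrary values in $[0,1]$ on the remaining fault-free nodes; the Byzantine nodes in $F$ equivocate, sending $0$ along every edge into $S_1$ and $1$ along every edge into $S_2$. The claim to establish by induction on $t$ is that in $E$ every node of $S_1$ holds $0$ and every node of $S_2$ holds $1$.

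The engine for the induction is a per-node indistinguishability argument, and this is where I expect the main difficulty. Because the sets $F_x(i)$ may differ from node to node, there is no single ``all-zero'' companion execution that fools all of $S_1$ at once; instead I would, for each $i\in S_1$ and each round, construct a \emph{tailored} hypothetical execution $E_i$ whose fault set is the feasible set $F_x(i)\setminus\{i\}$ (still feasible, being a subset of a feasible set) and in which every fault-free node has input $0$. Validity then forces all fault-free nodes of $E_i$, and in particular $i$ since $i\notin F_x(i)\setminus\{i\}$, to remain at $0$. Comparing the vector $r_i[t]$ received by $i$ in $E$ with the one received in $E_i$, the structural fact guarantees that every incoming neighbor of $i$ lying outside $S_1$ is in $F_x(i)$, hence \emph{faulty} in $E_i$, and can be made to transmit to $i$ exactly the value it sends in $E$; neighbors inside $S_1$ transmit $0$ in both executions by the induction hypothesis, and neighbors faulty in $E$ send $0$ in both by construction. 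Thus $i$ receives identical inputs in $E$ and $E_i$, so the deterministic transition function $Z_i$ returns the same value, namely $0$. A symmetric construction, using companion executions with all inputs equal to $1$, freezes $S_2$ at $1$ in the very same execution $E$.

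Assembling these facts, in $E$ the fault-free nodes of $S_1$ and $S_2$ differ by $1$ at every iteration, contradicting the convergence condition; therefore $G_F$ cannot contain two source components, which combined with the automatic existence of at least one gives exactly one. The crux I would treat most carefully is precisely the mismatch between the per-node sets $F_x(i)$ and the need for a single global bad execution: the resolution is to exploit that correctness must hold against \emph{every} feasible fault set, so each $Z_i$ can be pinned down separately by its own companion execution $E_i$, while only the execution $E$ itself must be common to all nodes. I would also be careful that the two inductions (for $S_1$ at $0$ and $S_2$ at $1$) proceed simultaneously within the shared execution $E$ using disjoint families of companion executions, and dispose of the trivial case $\scriptv_F=\emptyset$ separately.
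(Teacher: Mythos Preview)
Your argument is correct and follows the same overall strategy as the paper: assume two source components in some reduced graph, build a single execution with actual fault set $F$ in which the two components are initialized at different values, and use per-node indistinguishability against an alternative feasible fault set to freeze each component at its initial value, contradicting convergence. The paper's version differs only in the tactical device used to pin $v_i[t]$: instead of your companion executions $E_i$ with all-$0$ (resp.\ all-$1$) inputs and fault set $F_x(i)\setminus\{i\}$, the paper has the Byzantine nodes in $F$ send out-of-range values $m^-<m$ (resp.\ $M^+>M$) and then argues that node $i$ cannot distinguish the scenario where $N_i^-\cap F$ is faulty (forcing $v_i\geq m$) from the scenario where $N_i^-\cap F_x(i)\cap\scriptv_F$ is faulty (forcing $v_i\leq m$), squeezing $v_i$ to $m$. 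Your formulation is arguably cleaner because it avoids the $m^-,M^+$ gadget and the two-scenario case split, and it makes explicit the point you correctly flag as the crux---that the sets $F_x(i)$ vary with $i$, so the indistinguishability must be established node by node while the contradictory execution $E$ is global.
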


\begin{proofSketch}
A complete proof is presented in Appendix \ref{app_s_necessity}.
The proof is by contradiction. Let us assume that a correct IABC algorithm exists, and for some feasible fault set $F$, and  feasible sets $F_x(i)$ for each $i\in\scriptv-F$, the resulting reduced graph contains two source components. Let $L$ and $R$ denote the nodes in the two source components, respectively. Thus, $L$ and $R$ are disjoint and non-empty. Let $C = (\scriptv - F - L -R)$ be the remaining nodes in the reduced graph. $C$ may or may not be non-empty. Assume that the nodes in $F$ (if non-empty) are all faulty, and all the nodes in $L$, $R$, and $C$ (if non-empty) are fault-free. 
Suppose that each node in $L$ has initial input equal to $m$, each node in $R$
has initial input equal to $M$, where $M>m$, and each node in $C$ has
an input in the range $[m,M]$. As elaborated in Appendix \ref{app_s_necessity},
the faulty nodes can behave in such a manner that, in each iteration, nodes
in $L$ and $R$ are forced to maintain their updated state equal to
$m$ and $M$, respectively, 
so as to satisfy the {\em validity} condition. This ensures that,
no matter how many iterations are performed, the {\em convergence}
condition cannot be satisfied.
\end{proofSketch}

\section{Algorithm 1}
\label{s_algo}

We will prove that there exists an IABC algorithm -- particularly {\em Algorithm 1} below -- that satisfies the {\em validity} and {\em convergence} conditions provided that the graph $G(\scriptv,\scripte)$ satisfies the necessary condition in Theorem~\ref{thm:nc}.
This implies that the necessary condition in Theorem~\ref{thm:nc} is also sufficient. {\em Algorithm 1} has the three-step structure described in Section \ref{sec:iabc}. This algorithm is
a generalization -- to accommodate the generalized fault model -- of iterative algorithms that
 were analyzed in prior work \cite{AA_Dolev_1986,AA_nancy,AA_PCN_Local, leblanc_HiCoNs}, including in our own prior work as well \cite{vaidya_PODC12, vaidya_matrix_IABC}.
The key difference from previous algorithms is in the {\em Update} step below.

~

\newpage
\hrule
{\bf Algorithm 1}
\vspace*{4pt}\hrule

\begin{enumerate}

\item {\em Transmit step:} Transmit current state $v_i[t-1]$ on all outgoing edges and self-loop.
\item {Receive step:} Receive values on all incoming edges and self-loop. These values form vector $r_i[t]$ of size $|N_i^-|+1$ (including the value from node $i$ itself). When a fault-free node expects to receive a message from an incoming neighbor but does not receive the message, the message value is assumed to be equal to some {\em default value}.

\item {\em Update step:}
Sort the values in $r_i[t]$ in an increasing order (breaking ties arbitrarily). Let $D$ be a vector of nodes arranged in an order ``consistent'' with $r_i[t]$: specifically, $D(1)$  is the node that sent the smallest value in $r_i[t]$, $D(2)$ is the node that sent the second smallest value in $r_i[t]$, and so on. The size of vector $D$ is also $|N_i^-|+1$.

From vector $r_i[t]$, eliminate the smallest $f_1$ values, and the largest $f_2$ values, where $f_1$ and $f_2$ are defined as follows:
\begin{itemize}
\item $f_1$ is the largest number such that there exists
a feasible fault set $F' \subseteq N_i^-$ containing nodes $D(1), D(2), ..., D(f_1)$. Recall that $i \not\in N_i^-$.
\item $f_2$ is the largest number such that there exists a feasible fault set
$F'' \subseteq N_i^-$ containing nodes $D(|N_i^-|-f_2+2), D(|N_i^-|-f_2+3), ..., D(|N_i^-|+1)$.
\end{itemize} 
$F'$ and $F''$ above may or may not be identical. 

Let $N_i^*[t]$ denote the set of nodes from whom the remaining $|N_i^-| +1 - f_1 - f_2$ values in $r_i[t]$ were received, and let $w_j$ denote the value received from node $j\in N_i^*[t]$. Note that $i \in N_i^*[t]$. Hence, for convenience, define $w_i=v_i[t-1]$ to be the value node $i$ ``receives'' from itself.  Observe that
if $j\in N_i^*[t]$ is fault-free, then $w_j=v_j[t-1]$.

Define
\begin{eqnarray}
v_i[t] ~ = ~ Z_i(r_i[t]) ~ = ~\sum_{j\in N_i^*[t]} a_i \, w_j
\label{e_Z}
\end{eqnarray}
where
\[ a_i = \frac{1}{|N_i^*[t]|} = \frac{1}{|N_i^-|+1-f_1 - f_2}
\] 


The ``weight'' of each term on the right-hand side of
(\ref{e_Z}) is $a_i$,
 and these weights add to 1. Also, $0<a_i\leq 1$. 
Although $f_1, f_2$ and $a_i$ may be different for each iteration $t$, for simplicity, we do not explicitly represent this dependence on $t$ in the notations.

\end{enumerate}

\hrule

~

Observe $f_1+f_2$ nodes whose values are eliminated in the {\em Update} step above are all in $N_i^-$. Thus, the above algorithm can be implemented
by node $i$ if it knows which of its incoming neighbors may fail simultaneously; node $i$ does not need to know the entire fault domain $\scriptf$ as such.

The main difference between the above algorithm and IABC algorithms
in prior work is in the choice of the values eliminated
from vector $r_i[t]$ in the {\em Update} step.
 The manner in which the values are eliminated
ensures that the values received from nodes $D(f_1+1)$ and $D(|N_i^-|-f_2+1)$
(i.e., the smallest and largest values that survive in $r_i[t]$) are
within the convex hull of the state of fault-free nodes, even if nodes
$D(f_1+1)$ and $D(|N_i^-|-f_2+1)$ may not be fault-free. 
This property is useful in proving algorithm correctness (as discussed
below).

\section{Sufficiency}
\label{s_sufficiency}

We will show that Algorithm 1 satisfies validity and convergence conditions,
provided that $G(\scriptv, \scripte)$ satisfies the condition below, which matches the necessary condition stated in Theorem \ref{thm:nc}.

\noindent{\bf Sufficient condition:}
{\em 
Any reduced graph $G_F$ corresponding to any feasible fault set $F$ contains
exactly one {\em source component}.
}

In the rest of this section, we assume that $G(\scriptv,\scriptf)$ satisfies
the above condition. To prove its sufficiency, we first develop a {\em transition matrix} representation of the {\em Update} step in Algorithm 1.

\subsection{Transition Matrix Representation}

In our discussion below, $\matrixm[t]$ is a square matrix, $\matrixm_i[t]$ is the $i$-th row
of the matrix, and $\matrixm_{ij}[t]$ is the element at the intersection of the $i$-th
row and $j$-th column of $\matrixm[t]$.

For a given execution of Algorithm 1, let $F$ denote the actual set of faulty nodes in that
execution. Let $|F|=\psi$. Without loss of generality, suppose that 
nodes $1$ through $(n-\psi)$ are fault-free, and if $\psi > 0$, nodes $(n-\psi+1)$ through $n$ are faulty. Denote by $v[0]$ the column vector consisting of the initial states of all the fault-free nodes. Denote by $v[t]$, where $t \geq 1$, the column vector consisting of the states of all the fault-free nodes at the end of the $t$-th iteration. The $i$-th element of vector $v[t]$ is state $v_i[t]$. The size of vector $v[t]$ is $(n - \psi)$. 

We will show that
the iterative update of the state of a fault-free node $i~(1 \leq i \leq n-\psi )$ performed in (\ref{e_Z}) in Algorithm 1 can be expressed using the matrix form below.

\begin{equation}
\label{matrix:e_Z}
v_i[t] = \matrixm_i[t]~v[t-1]
\end{equation}
where $\matrixm_i[t]$ is a {\em stochastic row} vector of size $n - \psi$. That is, $\matrixm_{ij}[t] \geq 0$, for $1 \leq j \leq n-\psi$, and $\sum_{1 \leq j \leq n-\psi} \matrixm_{ij}[t] = 1$.\footnote{\comment{===== old note=====Recall that the operation in the update state is dependent on the values sent
by the incoming neighbors, some of which may be faulty. In other words, new state of
a node depends on the behavior of the faulty neighbors as well.
Therefore, as will be seen later,
in addition to $t$, the row vector $\matrixm_i[t]$ may depend on the state vector $v[t-1]$
as well as the behavior of the faulty nodes in $F$.
For simplicity, the notation $\matrixm_i[t]$ does not explicitly represent this dependence.==========}In addition to $t$, the row vector $\matrixm_i[t]$ may depend on the state vector $v[t-1]$
as well as the behavior of the faulty nodes in $F$.
For simplicity, the notation $\matrixm_i[t]$ does not explicitly represent this dependence.} By ``stacking'' (\ref{matrix:e_Z}) for different $i$, $1 \leq i \leq n-\psi$, we will represent the {\em Update} step of Algorithm 1 at all the fault-free nodes together using (\ref{matrix:alg1}) below.
\begin{equation}
\label{matrix:alg1}
v[t] = \matrixm[t]~v[t-1]
\end{equation}
where $\matrixm[t]$ is a $(n-\psi) \times (n-\psi)$ {\em row stochastic} matrix, with its $i$-th row being equal to $\matrixm_i[t]$ in (\ref{matrix:e_Z}). $\matrixm[t]$ is said to be a
\underline{transition matrix}.

In the rest of this section, we will first ``construct'' a transition matrix
$\matrixm[t]$ that satisfies
certain desirable properties. Then, we will identify a connection between the
transition matrix and the sufficiency condition stated above, and use this connection
to establish {\em convergence} property for Algorithm 1. The {\em validity} property
also follows from the transition matrix representation. 

\subsection{Construction of the Transition Matrix}
\label{s_construction}

We will construct a transition matrix with the property described in Lemma \ref{lemma:tm2cm} below. 

\begin{lemma}
\label{lemma:tm2cm}
The {\em Update step} of Algorithm 1 at the fault-free nodes can be
expressed using row stochastic transition matrix {\normalfont$\matrixm[t]$}, such that
there exists a feasible fault set {\normalfont$F_x(i)$} for each {\normalfont $i\in \scriptv-F$}
 such that,  for all {\normalfont$j \in \{i\} \cup ((\scriptv_{F} - F_x(i)) \cap N_i^-)$},

\[ \text{{\normalfont $\matrixm_{ij}[t] ~ \geq ~ \beta$}}
\]

where $\beta$ is a constant (to be defined later), and $0<\beta\leq 1$.
\end{lemma}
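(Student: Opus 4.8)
The plan is to build the transition matrix $\matrixm[t]$ one row at a time: for each fault-free node $i$ I will exhibit a stochastic row $\matrixm_i[t]$, supported on the fault-free columns, that reproduces the \emph{Update} step $v_i[t]=a_i\sum_{j\in N_i^*[t]} w_j$ and that places weight at least $\beta$ on $i$ and on every fault-free incoming neighbor outside a suitable feasible set $F_x(i)$. The only difficulty in turning the averaging rule into such a row is that $N_i^*[t]$ may contain faulty senders, whose reported values $w_j$ are not states of fault-free nodes and so cannot appear directly as entries against columns of $v[t-1]$; these contributions must be re-expressed in terms of fault-free states.

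First I would isolate the combinatorial fact that powers everything, using the maximality built into the definitions of $f_1$ and $f_2$. Among the $f_1+1$ smallest senders $D(1),\dots,D(f_1+1)$ at least one must be fault-free: if $i$ is among them we are done since $i$ is fault-free, and otherwise these are all in $N_i^-$, so if they were all faulty then $F\cap N_i^-$ would be a feasible fault set contained in $N_i^-$ containing $f_1+1$ of the smallest senders, contradicting the maximality of $f_1$. Symmetrically, at least one of the $f_2+1$ largest senders is fault-free. Choosing such fault-free nodes $p$ and $q$, and writing $w_\ell$ and $w_u$ for the smallest and largest \emph{surviving} values, the sorted order gives $v_p[t-1]\le w_\ell \le w_u \le v_q[t-1]$. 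Hence every surviving value $w_j$, faulty or not, lies in $[v_p[t-1],\,v_q[t-1]]$, i.e.\ in the convex hull of two fault-free states.

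With this in hand the construction is immediate: assign weight $a_i$ to the column of each \emph{fault-free} surviving sender (including $i$ itself, which is never eliminated since $i\notin N_i^-$ and the eliminated sets are subsets of $N_i^-$), and for each \emph{faulty} surviving sender $j$ write $w_j=\lambda_j\,v_p[t-1]+(1-\lambda_j)\,v_q[t-1]$ with $\lambda_j\in[0,1]$ and add $a_i\lambda_j$ and $a_i(1-\lambda_j)$ to the columns of $p$ and $q$. The resulting row is nonnegative, sums to $1$, is supported on fault-free columns, and reproduces (\ref{e_Z}); taking $\beta$ to be a uniform lower bound on $a_i=1/|N_i^*[t]|$ (which is at least $1/|\scriptv|$) guarantees $\matrixm_{ii}[t]\ge\beta$ and $\matrixm_{ij}[t]\ge\beta$ for every fault-free \emph{surviving} neighbor $j$.

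The main obstacle is the choice of $F_x(i)$. In the plain construction above the fault-free neighbors that may receive weight below $\beta$ are exactly the \emph{eliminated} ones, and these split into a bottom block (contained in the feasible set $F'$ of the algorithm) and a top block (contained in the feasible set $F''$), whose union need not be feasible. To repair this I would not simply discard the eliminated values but redistribute weight using the bracketing of the previous paragraph: since $v_i[t]\in[w_\ell,w_u]$ lies between fault-free states, an eliminated fault-free neighbor on one side can be kept above $\beta$ by balancing it against a fault-free state on the opposite side without changing the value $v_i[t]$, so that only the eliminated fault-free neighbors on a \emph{single} side remain uncovered; those lie in one of the feasible sets $F'$ or $F''$, which I would then take as $F_x(i)$. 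Carrying out this redistribution while keeping a single instance-independent constant $\beta$ — rather than one that degrades as $v_i[t]$ approaches an endpoint of $[w_\ell,w_u]$ — is the delicate point, and I expect the bulk of the bookkeeping to concentrate there.
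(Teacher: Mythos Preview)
Your overall architecture is the same as the paper's: build $\matrixm_i[t]$ row by row, keep weight $a_i$ on each fault-free survivor (including $i$), and replace every faulty survivor's contribution by a convex combination of fault-free states drawn from the two eliminated blocks. Your combinatorial observation that among $D(1),\dots,D(f_1+1)$ (respectively the top $f_2+1$) at least one sender is fault-free is exactly the fact the paper uses. Where you and the paper diverge is in the \emph{target} of the redistribution, and that difference is precisely what resolves the ``delicate point'' you flag.

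You redistribute each faulty survivor's mass to two single witnesses $p$ and $q$. This leaves every other fault-free eliminated neighbor at weight zero, so you then propose a second pass in which each such $j$ on one side is individually ``balanced'' against a fault-free state on the opposite side. The problem you correctly anticipate is real: writing, say, $v_i[t-1]=\mu_j\,v_j[t-1]+(1-\mu_j)\,v_q[t-1]$ gives $\mu_j$ that depends on the data and can be $0$, so there is no instance-independent lower bound on the mass $j$ receives. The paper avoids this by never routing through single witnesses. Instead it uses the \emph{averages}
\[
m_{\mathcal S}=\frac{1}{|\mathcal S_g|}\sum_{j\in\mathcal S_g}v_j[t-1],
\qquad
m_{\mathcal L}=\frac{1}{|\mathcal L_g|}\sum_{j\in\mathcal L_g}v_j[t-1],
\]
writes each faulty survivor's value as $w_k=S_k\,m_{\mathcal S}+L_k\,m_{\mathcal L}$ with $S_k+L_k=1$, and spreads the resulting mass \emph{uniformly} over $\mathcal S_g$ and $\mathcal L_g$. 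Since $\max(S_k,L_k)\ge\tfrac12$ for some $k$, one entire side receives at least $a_i/(2|\mathcal S_g|)$ (or $a_i/(2|\mathcal L_g|)$) per node; the other side is then taken as $F_x(i)$. When there is no faulty survivor to redistribute, the paper manufactures one by peeling off half of $i$'s own self-weight and treating it the same way, at the cost of an extra factor $\tfrac12$ in $\beta$. The remaining cases ($\mathcal S_g=\emptyset$ or $\mathcal L_g=\emptyset$) are easier, since then the empty side needs no coverage at all. The uniform constant that falls out is $\beta=\alpha/(4n)$ with $\alpha=\min_i 1/(|N_i^-|+1)$.

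So your plan is sound up to the redistribution target; replacing your single witnesses $p,q$ by the averages $m_{\mathcal S},m_{\mathcal L}$ is the missing mechanism that converts your sketch into a complete argument with a value-independent $\beta$, and it also removes the need for your second balancing pass entirely.
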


In \cite{vaidya_matrix_IABC} as well, we construct a transition matrix 
to prove correctness of an IABC algorithm under the $f$-total fault model.
However, the {\em generalized} fault model introduces additional complexity,
which is handled here using a new approach to construct the transition
matrix.

\begin{proof}
We prove the correctness of Lemma \ref{lemma:tm2cm} by constructing $\matrixm_i[t]$ for $1 \leq i \leq n - \psi$ that satisfies the conditions in Lemma \ref{lemma:tm2cm}. Recall that
$F$ is the set of faulty nodes, and $|F|=\psi$. As stated before,
without loss of generality, nodes $1$ through $n-\psi$ are assumed to be
fault-free, and the remaining $\psi$ nodes faulty.

Consider a fault-free node $i$ performing the {\em Update} step in Algorithm 1.
In the {\em Update} step, recall that the smallest $f_1$ and the largest $f_2$ values are
eliminated from $r_i[t]$, where the choice of $f_1$ and $f_2$ is described
in Algorithm 1.
 Let us denote by $\scripts$ and $\scriptl$, respectively, the set of nodes\footnote{Although $\scripts$ and $\scriptl$ may be different for each $t$, for simplicity, we do not explicitly represent this dependence on $t$ in the notations $\scripts$ and $\scriptl$.} from whom the smallest $f_1$ and the largest $f_2$ values were received by node $i$ in iteration $t$. Define sets $\scripts_g$ and $\scriptl_g$ to be subsets of $\scripts$ and $\scriptl$ that contain all the fault-free nodes in $\scripts$ and $\scriptl$, respectively. That is, $\scripts_g = \scripts \cap (\scriptv - F)$ and $\scriptl_g = \scriptl \cap (\scriptv - F)$.

Construction of $\matrixm_i[t]$ differs somewhat depending on whether
sets $\scripts_g, \scriptl_g$ and $N_i^*[t] \cap F$ are empty or non-empty.
We divide the possibilities into 6 separate cases.  Due to space limitation,
here we present the construction for one of the cases (named Case I).
The construction for the remaining
cases is presented in Appendix \ref{app_s_construction}.

In Case I,
$\scripts_g \neq \Phi, \scriptl_g \neq \Phi$, and $N_i^*[t] \cap F \neq \Phi$. Let $m_{\scripts}$ and $m_{\scriptl}$ be defined as shown below. Recall that the nodes in $\scripts_g$ and $\scriptl_g$ are all fault-free, and therefore, for any node $j\in \scripts_g \cup\scriptl_g$, $w_j=v_j[t-1]$ (in the notation of Algorithm 1).

\begin{equation*}
m_{\scripts} = \frac{\sum_{j \in \scripts_g} v_j[t-1]}{|\scripts_g|}~~~~~\text{and}~~~~~m_{\scriptl} = \frac{\sum_{j \in \scriptl_g} v_j[t-1]}{|\scriptl_g|}
\end{equation*}
Now, consider any node $k \in N_i^*[t]$. By the definition of sets $\scripts_g$ and $\scriptl_g$,
 $m_{\scripts} \leq w_k \leq m_{\scriptl}$. Therefore, we can find weights $S_k \geq 0$ and $L_k \geq 0$ such that $S_k + L_k = 1$, and

\begin{eqnarray}
w_k & = & S_k~m_{\scripts} + L_k~m_{\scriptl} \\
& = & 
\frac{S_k}{|\scripts_g|}
\sum_{j \in \scripts_g} v_j[t-1]
+
\frac{L_k}{|\scriptl_g|}
\sum_{j \in \scriptl_g} v_j[t-1]
\label{eq:caseI}
\end{eqnarray}
Clearly, at least one of $S_k$ and $L_k$ must be $\geq 1/2$.
%
%
We now define elements $\matrixm_{ij}[t]$ of row $\matrixm_i[t]$:

\begin{itemize}
\item For $j \in N_i^*[t]\cap(\scriptv-F)$ : In this case, $j$ is either a fault-free
incoming neighbor of $i$, or $i$ itself.
For each such $j$, define $\matrixm_{ij}[t] = a_i$. This is obtained by observing
in (\ref{e_Z}) that the contribution of such a node $j$ to the new state
$v_i[t]$ is $a_i~w_j = a_i~v_j[t-1]$.

The elements of $\matrixm_i[t]$ defined here add up to $$|N_i^*[t] \cap (\scriptv - F)|~a_i$$

\item For $j\in \scripts_g\cup\scriptl_g$ : In this case, $j$ is a fault-free node in $\scripts$ or  $\scriptl$.

For each $j \in \scripts_g$,
\[
\matrixm_{ij}[t] ~=~ a_i \, \sum_{k \in N_i^*[t] \cap F} \frac{S_k}{|\scripts_g|}
\]
and for each node $j \in \scriptl_g$,
\[
\matrixm_{ij}[t] ~=~ a_i \, \sum_{k \in N_i^*[t] \cap F} \frac{L_k}{|\scriptl_g|}
\]
To obtain these two expressions, we represent value $w_k$ sent by each faulty node $k$ in $N_i^*[t]$, i.e., $k\in N_i^*[t] \cap F$, using (\ref{eq:caseI}).  
Recall that this node $k$ contributes $a_iw_k$ to (\ref{e_Z}).
The above two expressions are then obtained by summing (\ref{eq:caseI})
over all the faulty nodes in $N_i^*[t]\cap F$, and replacing this sum
by equivalent contributions by nodes in $\scripts_g$ and $\scriptl_g$.

The elements of $\matrixm_i[t]$ defined here add up to $$a_i \, \sum_{k \in N_i^*[t] \cap F} (S_k + L_k) = |N_i^*[t] \cap F|~a_i.$$

\item For $j\in (\scriptv - F) - (N_i^*[t] \cup \scripts_g \cup \scriptl_g)$ :
 These fault-free nodes have not yet been considered above.
For each such node $j$, define $\matrixm_{ij}[t] = 0$.
\end{itemize}
With the above definition of $\matrixm_i[t]$, it should be easy to see
that $\matrixm_i[t]\,v[t-1]$ is, in fact, identical to $v_i[t]$ obtained using (\ref{e_Z}). Thus, the above construction of $\matrixm_i[t]$ results in the contribution of the faulty nodes in $N_i^*[t]$ to (\ref{e_Z}) being replaced by an equivalent contribution from fault-free nodes in $\scriptl_g$ and $\scripts_g$.

\paragraph{Properties of $\matrixm_i[t]$:}

First, we show that $\matrixm[t]$ is row stochastic. Observe that all
the elements of $\matrixm_i[t]$ are non-negative.
Also, all the elements of $\matrixm_i[t]$ above add up to
\[
|N_i^*[t] \cap (\scriptv - F)|~a_i + |N_i^*[t] \cap F|~a_i = |N_i^*[t]|~a_i = 1
\]
because $a_i = 1/|N_i^*[t]|$ as defined in Algorithm 1.
Thus, $\matrixm_i[t]$ is a stochastic row vector.

Recall that from the above discussion, for $k\in N_i^*[t]$,
one of $S_k$ and $L_k$ must be $\geq 1/2$.
Without loss of generality, assume that $S_s \geq 1/2$ for some $s \in N_i^*[t] \cap F$.
Consequently, for each node $j \in \scripts_g$, $\matrixm_{ij}[t] \geq \frac{a_i}{|\scripts_g|} S_s \geq \frac{a_i}{2|\scripts_g|}$. Also,
 for each fault-free node $j$ in $N_i^*[t]$,
 $\matrixm_{ij}[t] = a_i$.
Thus, if $\beta$ is chosen such that
\begin{equation}
\label{eq:beta_caseI}
0 < \beta \leq \frac{a_i}{2|\scripts_g|}
\end{equation}
and $F_x(i)$ is defined to be equal to $\scriptl$, then the condition in the
lemma holds for node $i$. That is,
$\matrixm_{ij}[t] \geq \beta$ for $j \in \{i\} \cup ((\scriptv_{F} - F_x(i)) \cap N_i^-)$. 

\paragraph{All Cases Together:} Using similar constructions in other cases
as well (presented in Appendix \ref{app_s_construction}) and a suitable choice of $\beta$ (presented in Appendix \ref{a_s_together}), we can obtain a row stochastic matrix $\matrixm[t]$, and for each $i\in \scriptv-F$ identify a
feasible fault set $F_x(i)$, such that $\matrixm_{ij}[t] \geq \beta$ for all $j \in \{i\} \cup ((\scriptv_{F} - F_x(i)) \cap N_i^-)$.
Thus, Lemma \ref{lemma:tm2cm} can be proved correct. 

\end{proof}

\subsection{Validity and Convergence of Algorithm 1}
\label{s_proof}


The rest of the proof structure is derived from our previous work
wherein we proved the correctness of an IABC algorithm for the $f$-total fault
model \cite{vaidya_matrix_IABC}.
Let $R_{F}$ denote the set of all the reduced graphs of $G(\scriptv, \scripte)$
corresponding to a feasible fault set $F$.
Let $\tau = |R_{F}|$. $\tau$ depends on $F$ and the underlying network, and is finite.

In this discussion, let us denote a reduced graph by an italic upper case letter, and the corresponding ``connectivity matrix'' (defined below) using the same letter in boldface upper case. Thus, $\matrixh$ denotes the connectivity matrix for graph $\graphh \in R_{F}$.

Non-zero elements of connectivity matrix $\matrixh$ are defined as follows: (i) for $1 \leq i,j \leq n-\psi$, $\matrixh_{ij} = 1$ if and only if $(j, i) \in \graphh$, and (ii) $\matrixh_{ii} = 1$ for $1 \leq i \leq n-\psi$. That is, non-zero elements of row $\matrixh_i$ correspond to the incoming links at node $i$, and the self-loop at node $i$. Thus, the connectivity matrix for any reduced graph in $R_{F}$ has a non-zero diagonal.

Based on the {\em sufficient condition} stated at the start of Section \ref{s_sufficiency} and Lemma \ref{lemma:tm2cm}, we can show the following key lemmas.
 The proofs are presented in Appendix \ref{a_s_non-zero} and \ref{a_s_cm}.

\begin{lemma}
\label{lemma:non-zero}
For any $\graphh \in R_{F}, {\normalfont\bf \matrixh^{n-\psi}}$ has at least one non-zero column.
\end{lemma}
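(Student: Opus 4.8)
The plan is to give a purely combinatorial reading of the connectivity matrix and reduce the claim to a reachability statement about the reduced graph $\graphh$, which then follows from the sufficient condition. (Note first that since $\matrixh$ has a non-zero diagonal, every column of every power $\matrixh^k$ already contains the diagonal entry as a non-zero, so the meaningful reading of ``non-zero column'' is that some column has \emph{all} of its entries non-zero; this is the statement I will establish.) First I would observe that, by the definition of $\matrixh$, for any $k\ge 1$ the entry $(\matrixh^k)_{ij}$ is positive if and only if there is a directed walk of length exactly $k$ from $j$ to $i$ in $\graphh$, where an edge $(j,i)\in\graphh$ contributes $\matrixh_{ij}=1$ and the self-loop at each node contributes $\matrixh_{ii}=1$. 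Because $\matrixh$ has a non-zero diagonal, any walk can be extended by repeating self-loops; hence $(\matrixh^k)_{ij}>0$ exactly when $j$ reaches $i$ by a directed path of length at most $k$. Since $\graphh$ has $n-\psi$ vertices, every reachable ordered pair is joined by a simple path of length at most $n-\psi-1\le n-\psi$. Therefore a column $j$ of $\matrixh^{n-\psi}$ has all of its entries non-zero if and only if node $j$ has a directed path in $\graphh$ to every node of $\graphh$.

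The remaining task is to exhibit a single node from which all nodes are reachable. Here I would invoke the sufficient condition stated at the start of Section~\ref{s_sufficiency}: the reduced graph $\graphh\in R_{F}$ contains exactly one source component, say $\graphh_s$. Consider the decomposition graph $\graphh^d$, which is a DAG (as noted after Definition~\ref{def:decompose}), and in which $\graphh_s$ corresponds to the unique vertex not reachable from any other vertex, i.e., the unique vertex of in-degree zero. I would then argue the standard DAG fact that in a finite acyclic graph every vertex is reachable from some source vertex: starting from an arbitrary vertex and repeatedly following an incoming edge backward must terminate, by acyclicity and finiteness, at a vertex of in-degree zero. Since $\graphh^d$ has a unique such vertex, namely $\graphh_s$, every vertex of $\graphh^d$ is reachable from $\graphh_s$. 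Translating back to $\graphh$, this means that from any node $j$ in the source component $\graphh_s$ there is a directed path to every node of $\graphh$: within $\graphh_s$ all nodes reach one another by strong connectivity, and the reachability in $\graphh^d$ then supplies paths from $\graphh_s$ into every other strongly connected component.

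Fixing such a $j\in\graphh_s$ and combining the two parts, column $j$ of $\matrixh^{n-\psi}$ is entirely non-zero, which proves the lemma. The only delicate point is the first step, the reduction from ``a fully non-zero column of $\matrixh^{n-\psi}$'' to ``a node reaching all nodes'': it relies essentially on the self-loops (the non-zero diagonal) to pad walks up to length exactly $n-\psi$, and on the exponent $n-\psi$ being at least the longest simple path in an $(n-\psi)$-vertex graph. The graph-theoretic half is routine once the sufficient condition supplies the uniqueness of the source component.
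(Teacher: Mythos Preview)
Your proposal is correct and follows essentially the same approach as the paper: reduce the claim to the existence of a node in $\graphh$ with directed paths to every other node (hence a fully non-zero column in $\matrixh^{n-\psi}$), and obtain such a node from the sufficient condition's unique source component. The paper's proof is a terse sketch that simply asserts the existence of such a node $k$ and the path-length bound; your version is more detailed, making explicit the DAG argument (every vertex of $\graphh^d$ is reachable from the unique in-degree-zero vertex) and the role of the non-zero diagonal in padding walks, but the underlying idea is identical.
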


\begin{lemma}
\label{lemma:cm}
For any $t \geq 1$, there exists a graph $\graphh \in R_{F}$ such
 that $\beta {\normalfont\bf\matrixh \leq \matrixm}[t]$.
\end{lemma}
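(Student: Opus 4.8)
The plan is to exploit the fact that Lemma~\ref{lemma:tm2cm} does not merely produce the transition matrix $\matrixm[t]$; it also hands us, for each fault-free node $i \in \scriptv - F$, a specific feasible fault set $F_x(i)$ together with the guarantee that $\matrixm_{ij}[t] \geq \beta$ for every $j \in \{i\} \cup ((\scriptv_F - F_x(i)) \cap N_i^-)$. The crucial observation is that these $F_x(i)$ are exactly the kind of objects one is permitted to feed into the reduced-graph construction of Definition~\ref{def:reduced}, and that the off-diagonal support they carve out there matches the index set of the $\beta$-lower-bound guarantee. So the first step I would take is to fix the collection $\{F_x(i)\}_{i \in \scriptv - F}$ delivered by Lemma~\ref{lemma:tm2cm} and plug precisely this collection into Definition~\ref{def:reduced}, obtaining a particular reduced graph $\graphh$. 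Since each $F_x(i)$ is feasible and the definition explicitly allows the choices for distinct nodes to differ, $\graphh$ is a legitimate member of $R_{F}$.

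Next I would read off the incoming-neighbor structure of node $i$ in $\graphh$. By Definition~\ref{def:reduced}, the edge set of $\graphh$ is obtained from $\scripte$ by first deleting all links incident on $F$ and then deleting, for each $i$, the links $(j,i)$ with $j \in F_x(i) \cap \scriptv_F \cap N_i^-$. Hence the surviving incoming neighbors of $i$ are exactly the nodes in $(\scriptv_F - F_x(i)) \cap N_i^-$. Translating this into the connectivity matrix $\matrixh$, and recalling that $\matrixh$ has a non-zero diagonal by definition, I obtain $\matrixh_{ij} = 1$ precisely when $j \in \{i\} \cup ((\scriptv_F - F_x(i)) \cap N_i^-)$, and $\matrixh_{ij} = 0$ otherwise.

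The final step is a direct element-wise comparison. Wherever $\matrixh_{ij} = 1$, the column index $j$ lies in the set for which Lemma~\ref{lemma:tm2cm} guarantees $\matrixm_{ij}[t] \geq \beta = \beta\,\matrixh_{ij}$; wherever $\matrixh_{ij} = 0$, the inequality $\matrixm_{ij}[t] \geq 0 = \beta\,\matrixh_{ij}$ holds trivially, since $\matrixm[t]$ is row stochastic and therefore has non-negative entries. Combining the two cases gives $\beta\,\matrixh_{ij} \leq \matrixm_{ij}[t]$ for all $i,j$, that is, $\beta\,\matrixh \leq \matrixm[t]$, which is the claim.

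I do not expect a genuine analytic obstacle here; the proof is essentially the act of aligning two constructions that were deliberately set up to match. The one point that demands care is verifying that the non-zero pattern of $\matrixh$ coincides \emph{exactly} with the index set appearing in Lemma~\ref{lemma:tm2cm}---in particular, confirming that the incoming neighbors deleted in forming $\graphh$ are the same nodes excluded from the $\beta$-lower-bound guarantee, and that the per-node feasible sets $F_x(i)$ supplied by Lemma~\ref{lemma:tm2cm} are admissible choices in Definition~\ref{def:reduced}. Both hold by construction, so the correspondence is tight and the element-wise bound follows immediately.
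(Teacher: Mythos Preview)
Your proposal is correct and follows essentially the same approach as the paper: you use the feasible fault sets $F_x(i)$ supplied by Lemma~\ref{lemma:tm2cm} to build the reduced graph $\graphh \in R_F$ and then compare the support of $\matrixh$ entrywise with the $\beta$-lower-bound guarantee. The only difference is that you spell out the element-wise verification more explicitly than the paper does.
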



\begin{theorem}
\label{thm:sufficiency}
Suppose that $G(\scriptv, \scripte)$ satisfies the sufficient condition stated above. Algorithm 1 satisfies both the validity and convergence conditions.
\end{theorem}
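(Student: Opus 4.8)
The plan is to work entirely with the transition-matrix representation $v[t]=\matrixm[t]\,v[t-1]$ established in (\ref{matrix:alg1}), so that $v[t]=\matrixm[t]\matrixm[t-1]\cdots\matrixm[1]\,v[0]$, and to derive both properties from the structure of these products. \emph{Validity} is immediate, and I would dispatch it first: since each $\matrixm[t]$ is row stochastic with non-negative entries (Lemma~\ref{lemma:tm2cm}), every $v_i[t]=\matrixm_i[t]\,v[t-1]$ is a convex combination of the fault-free states $\{v_j[t-1]\}_{1\le j\le n-\psi}$, so $\min_j v_j[t-1]\le v_i[t]\le\max_j v_j[t-1]$, which is exactly the validity condition. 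This also shows $\max_j v_j[t]$ is non-increasing and $\min_j v_j[t]$ non-decreasing in $t$, so the ``spread'' $\mathrm{span}(v[t]):=\max_j v_j[t]-\min_j v_j[t]$ is non-increasing; hence it suffices to drive $\mathrm{span}(v[t])\to 0$ along a subsequence to obtain the full \emph{convergence} condition $v_i[t]-v_j[t]\to 0$.

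For convergence I would use a coefficient-of-ergodicity (weak-ergodicity) argument. Recall that for a row-stochastic matrix $\mathbf{A}$ the Dobrushin coefficient $\lambda(\mathbf{A})=1-\min_{i_1,i_2}\sum_k\min(\mathbf{A}_{i_1k},\mathbf{A}_{i_2k})$ satisfies $0\le\lambda\le 1$, is submultiplicative ($\lambda(\mathbf{A}\mathbf{B})\le\lambda(\mathbf{A})\lambda(\mathbf{B})$), contracts the spread ($\mathrm{span}(\mathbf{A}x)\le\lambda(\mathbf{A})\,\mathrm{span}(x)$), and obeys $\lambda(\mathbf{A})\le 1-\gamma$ whenever $\mathbf{A}$ has some column all of whose entries are $\ge\gamma$. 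The whole proof then reduces to a single claim: there is a finite window length $\nu$ and a constant $\gamma>0$, depending only on $G$ and $F$, such that for every $t$ the block product $\matrixm[t+\nu-1]\cdots\matrixm[t]$ has a column with all entries $\ge\gamma$.

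To prove that claim I would take $\nu=\tau\,(n-\psi)$ (any fixed window of this size will do) and combine Lemmas~\ref{lemma:cm} and~\ref{lemma:non-zero}. By Lemma~\ref{lemma:cm}, for each $s$ there is a reduced graph $\graphh[s]\in R_F$ with $\beta\,\matrixh[s]\le\matrixm[s]$; since all matrices are non-negative, multiplying these inequalities gives $\matrixm[t+\nu-1]\cdots\matrixm[t]\ge\beta^{\nu}\,\matrixh[t+\nu-1]\cdots\matrixh[t]$. Each connectivity matrix satisfies $\matrixh[s]\ge I$ entrywise (non-zero diagonal), so the integer entries of the product $\matrixh[t+\nu-1]\cdots\matrixh[t]$ are monotone non-decreasing as more factors are appended, i.e.\ time-respecting reachability only grows. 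It therefore remains to exhibit one \emph{non-zero column} in this connectivity-matrix product; such a column has integer entries $\ge 1$, so the dominated column of the $\matrixm$-product is $\ge\beta^{\nu}=:\gamma$, as required. This is the heart of the argument and the step I expect to be the main obstacle, because Lemma~\ref{lemma:non-zero} only guarantees a non-zero column for a \emph{power} $\matrixh^{\,n-\psi}$ of a \emph{single} reduced graph, whereas here the factors $\matrixh[s]$ may be different reduced graphs at every step. The generalization rests precisely on the \emph{sufficient condition} that every reduced graph has \emph{exactly one} source component: this gives each $\graphh[s]$ a node that reaches all others (a rooted spanning tree), while two source components would leave no such node. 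Using the monotonicity of reachability together with the finiteness of $R_F$ (only $\tau$ distinct reduced graphs can occur), I would argue that within $\tau(n-\psi)$ steps the set of time-respecting descendants of some node saturates to all of $\scriptv-F$, producing the desired non-zero column; pinning down this combinatorial bound is the delicate part and mirrors the treatment in our prior $f$-total work \cite{vaidya_matrix_IABC}.

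With the claim in hand the conclusion is routine. Partitioning $1,\dots,k\nu$ into $k$ consecutive blocks of length $\nu$ and using submultiplicativity, $\lambda\big(\matrixm[k\nu]\cdots\matrixm[1]\big)\le\prod_{b=1}^{k}\lambda\big(\matrixm[b\nu]\cdots\matrixm[(b-1)\nu+1]\big)\le(1-\gamma)^{k}$. Hence $\mathrm{span}(v[k\nu])\le(1-\gamma)^{k}\,\mathrm{span}(v[0])\to 0$ as $k\to\infty$, and since $\mathrm{span}(v[t])$ is non-increasing in $t$ by validity, the full sequence satisfies $\mathrm{span}(v[t])\to 0$; that is, $v_i[t]-v_j[t]\to 0$ for all fault-free $i,j$, giving convergence. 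Together with validity, this proves that Algorithm~1 is correct whenever $G(\scriptv,\scripte)$ satisfies the sufficient condition.
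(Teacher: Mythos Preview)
Your proposal is correct and follows essentially the same route as the paper: row stochasticity of $\matrixm[t]$ for validity, then weak ergodicity of the block products $\bfQ(i)=\Pi_{t=(i-1)\nu+1}^{i\nu}\matrixm[t]$ with $\nu=\tau(n-\psi)$, using Lemma~\ref{lemma:cm} to dominate by $\beta^{\nu}\Pi\bfH[t]$ and Lemma~\ref{lemma:non-zero} to force a non-zero column, hence $\lambda(\bfQ(i))\le 1-\beta^{\nu}$. The ``delicate part'' you flag is resolved in the paper (Lemma~\ref{l_product_H}) precisely by the pigeonhole you anticipate: among $\tau(n-\psi)$ factors some fixed $\bfH_*\in R_F$ recurs at least $n-\psi$ times, and since every $\bfH[t]\ge I$ entrywise the full product dominates $\bfH_*^{\,n-\psi}$, which has a non-zero column by Lemma~\ref{lemma:non-zero}.
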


\begin{proof}
A complete proof is presented in Appendix \ref{a_sufficiency}. By repeated application of (\ref{matrix:alg1}), we can represent the {\em Update} step of Algorithm 1 at the $t$-th iterations ($t \geq 1$) as:
\begin{eqnarray}
v[t] & = & \left(\,\Pi_{i=1}^t \matrixm[i]\,\right)\, v[0]
\label{e_v_t}
\end{eqnarray}
where $\matrixm[i]$ is constructed as described above.
When presenting matrix products, for convenience of presentation, we adopt
the following convention: for $a<b$, $\Pi_{i=a}^b \bfA[i]$ denotes
the ``backward'' product $\bfA[b]\bfA[b-1]\cdots\bfA[a]$.
Thus, $\Pi_{i=1}^t \bfM[i]$ in (\ref{e_v_t}) above represents
$\bfM[t]\bfM[t-1]\cdots\bfM[1]$.

Since $\matrixm[i]$ is row stochastic, then from (\ref{matrix:alg1}),
 it follows that Algorithm 1 satisfies the validity condition. Based on Lemmas \ref{lemma:non-zero} and \ref{lemma:cm}, we can also show that the rows of $\Pi_{i=1}^t \matrixm[i]$ become identical in the limit (as elaborated in
Appendix \ref{a_sufficiency}). This observation and (\ref{e_v_t}) together
imply that the states of the fault-free nodes satisfy the convergence condition
too.
\end{proof}

\section{Conclusions}
\label{s_conclusions}

This paper considers a {\em generalized} fault model, which can be used to specify more complex failure patterns, such as correlated failures or non-uniform node reliabilities. Under this fault model, we prove a {\em tight} necessary and sufficient condition for the existence of synchronous iterative approximate Byzantine consensus algorithms in arbitrary directed graphs. The analysis of consensus under the generalized fault model sheds new light on how the fault model affects algorithm design.

\newpage

\appendix

\setlength {\parskip}{6pt}  

\centerline{\Large\bf APPENDIX}


\section{Necessity Proof in Section \ref{s_necessity}}
\label{app_s_necessity}

Now, we present the proof for Theorem \ref{thm:nc}. The proof is by contradiction. Let us assume that a correct IABC algorithm exists, and
for some feasible fault set $F$, and 
feasible sets $F_x(i)$ for each $i\in\scriptv-F$, the resulting
reduced graph contains two source components.

Let $L$ and $R$ denote the
nodes in the two source components, respectively. Thus, $L$ and $R$ are disjoint and non-empty.
Let $C = (\scriptv - F - L -R)$ be the remaining nodes in the reduced graph.
$C$ may or may not be non-empty.
Let us now assume that the nodes in $F$ (if non-empty) are all faulty, and all the nodes in $L$,
$R$, and $C$ (if non-empty) are fault-free.

Consider the case when (i) each node in $L$ has initial input $m$ , (ii) each node in $R$ has initial input $M$, such that $M > m$, and (iii) each node in $C$ (if non-empty) has an input in the interval $[m, M]$.

In the {\em Transmit step} of iteration 1 of the IABC algorithm, suppose that the faulty nodes in $F$ (if non-empty) send $m^- < m$ on outgoing links to nodes in $L$, send $M^+ > M$ on outgoing links to nodes in $R$, and send some arbitrary value in interval $[m, M]$ on outgoing links to nodes in $C$ (if non-empty). This behavior is possible since nodes in $F$ are Byzantine faulty. Note that $m^- < m < M < M^+$. Each fault-free node $k \in \scriptv - F$ sends to nodes in $N_k^+$ value $v_k[0]$ in iteration 1.

\newcommand{\Ni}{N_i^- \cap (C \cup R)}

Consider any node $i \in L$. Since $L$ is a source component
in the reduced graph, 
it must be true that $\Ni \subseteq N_i^- \cap F_x(i)\cap\scriptv_F$.\footnote{Explanation:
In the reduced graph, there are no incoming links at $i$
from nodes in $\Ni$. Thus, any incoming links in $\scripte$ from  the nodes
in $\Ni$ must have been removed when constructing $\scripte_F$ for the reduced graph.
Recall that when constructing $\scripte_F$, incoming links from nodes
in $N_i^-\cap F_x(i)\cap \scriptv_F$ are removed.
It should be noted that the algorithm is performed using the links in $\scripte$, not
the reduced graph. Thus, in the {\em Transmit step}, all links in $\scripte$ are used.
}

Now, node $i$ receives $m^-$ from the nodes in $N_i^- \cap F$, and values in $[m,M]$ from the nodes in $\Ni$, and $m$ from the nodes in $\{i\}\cup(N_i^- \cap L)$. Figure \ref{f_necessity} illustrates the behavior of faulty nodes in $F$ and the value received by node $i$.

\begin{figure}
\centering
\includegraphics[width=0.7\textwidth]{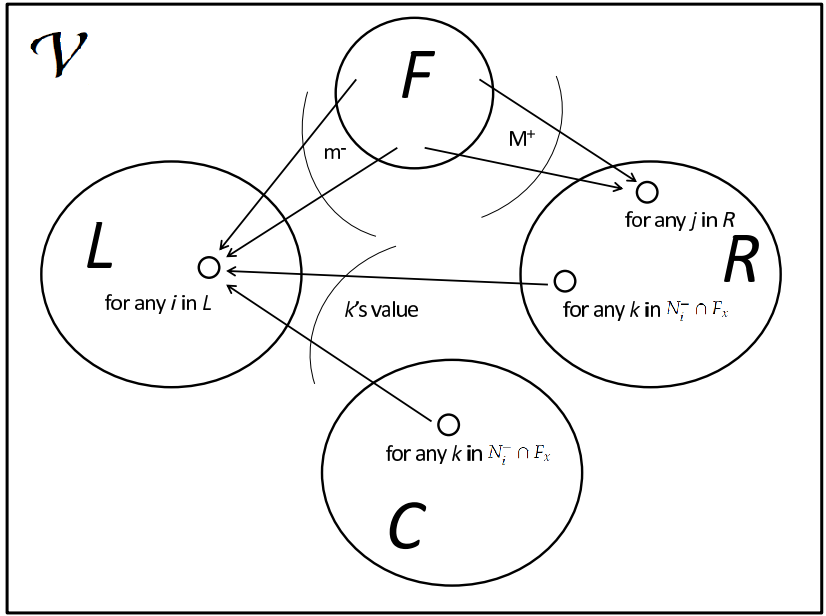}
\caption{Illustration of the behavior of faulty nodes in $F$ and the value received at node $i$.}
\label{f_necessity}
\end{figure}

Consider the following two cases:

\begin{itemize}
\item {\bf $N_i^- \cap F$ and $\Ni$ are both non-empty:} In this case, $(N_i^- \cap F) \subseteq F$ and $\Ni = N_i^-\cap F_x(i)\cap \scriptv_F \subseteq F_x(i)$. From node $i$'s perspective, consider two possible scenarios: (a) nodes in $N_i^- \cap F$ are all faulty, and the other nodes are fault-free, and (b) nodes in $\Ni = N_i^- \cap F_x(i)\cap \scriptv_F$ are all faulty, and the other nodes are fault-free.
Note that, since $F_x(i)$ is a feasible fault set, $N_i^- \cap F_x(i)\cap \scriptv_F$ is also
a feasible fault set.
Similarly,
since $F$ is a feasible fault set, $N_i^- \cap F$ is also a feasible fault set.

In scenario (a), from node $i$'s perspective, the fault-free nodes have sent values in interval $[m, M]$, whereas the faulty incoming neighbors, i.e., nodes in $N_i^- \cap F$, have sent value $m^-$. According to the validity condition, $v_i[1] \geq m$. On the other hand, in scenario (b), the fault-free incoming neighbors have sent values $m^-$ and $m$, where $m^- < m$; so $v_i[1] \leq m$, according to the validity condition. Since node $i$ does not know whether the correct scenario is (a) or (b), it must update its state to satisfy the validity condition in both cases. Thus, it follows that $v_i[1] = m$.

\item {\bf At most one of $N_i^- \cap F$ and $\Ni$ is non-empty:}
Recall that $N_i^-\cap F$ and $\Ni = N_i^- \cap F_x(i)\cap \scriptv_F$ are both feasible fault sets.
 Since at least one of these two sets is empty, their union, i.e., $(N_i^- \cap F)\cup (\Ni)$, is also a feasible fault set.

Then, from node $i$'s perspective, it is possible that all the nodes in $(N_i^- \cap F)\cup (\Ni)$ are faulty, and the rest of the nodes are fault-free. In this situation, the values sent to node $i$ by the fault-free nodes (which are all in $\{i\}\cup(N_i^- \cap L)$) are all $m$, and therefore, $v_i[1]$ must be set to $m$ as per the validity condition.

\end{itemize}
Hence, $v_i[1] = m$ for each node $i \in L$. Similarly, we can show that $v_j[1] = M$ for each node $j \in R$.

Now consider the nodes in set $C$ (if non-empty). All the values received by the nodes in $C$ are in $[m, M]$, therefore, their new state must also remain in $[m,M]$, as per the validity condition.

The above discussion implies that, at the end of iteration 1, the following conditions hold true:
(i) state of each node in $L$ is $m$ , (ii) state of each node in $R$ is $M$, and (iii) state of each node in $C$ (if non-empty) is in the interval $[m, M]$. These conditions are identical to the initial conditions listed previously. Then, by a repeated application of the above argument (proof by induction), it follows that for any $t \geq 0, v_i[t] = m$ for all nodes $i \in L$, $v_j[t] = M$ for all nodes $j \in R$ and $v_k[t] \in [m,M]$ for all nodes $k \in C$.

Since $L$ and $R$ both contain fault-free nodes, and $m\neq M$, the {\em convergence} requirement is not satisfied. This is a contradiction to the assumption that a correct iterative algorithm exists in $G(\scriptv, \scripte)$.

\section{Construction for other Cases in Section \ref{s_construction}}
\label{app_s_construction}

When discussing Case I in Section \ref{s_construction}, we deferred discussion
of the other cases. We present the construction for the rest of the cases here. There are six cases in total:

\begin{itemize}
\item Case I: $\scripts_g \neq \Phi, \scriptl_g \neq \Phi$, and $N_i^*[t] \cap F \neq \Phi$.
\item Case II: $\scripts_g \neq \Phi, \scriptl_g \neq \Phi$, and $N_i^*[t] \cap F = \Phi$.
\item Case III: $\scripts_g = \Phi, \scriptl_g \neq \Phi$, and $N_i^*[t] \cap F \neq \Phi$.
\item Case IV: $\scripts_g \neq \Phi, \scriptl_g = \Phi$, and $N_i^*[t] \cap F \neq \Phi$.
\item Case V: $\scripts_g = \Phi, \scriptl_g = \Phi$, and $N_i^*[t] \cap F \neq \Phi$.
\item Case VI: at most one of $\scripts_g$ and $\scriptl_g$ is non-empty,
 and  $N_i^*[t] \cap F = \Phi$.
\end{itemize}

Note that the choice of $f_1$ and $f_2$ in Algorithm 1 ensures that the value from node $i$ itself is never dropped from $r_i[t]$; therefore, $i \in N_i^*[t]$, and $N_i^*[t]$ is always non-empty. 

\subsection{Case II}

Now, we consider the case when $\scripts_g \neq \Phi, \scriptl_g \neq \Phi$, and $N_i^*[t] \cap F = \Phi$. That is, when each of $\scripts$ and $\scriptl$ contains at least one fault-free node, and $N_i^*[t]$ contains only fault-free node(s). In fact, the analysis of Case II is very similar to the analysis presented in Section \ref{s_construction} for Case I when $N_i^*[t]$ does contain a faulty node. 

We now discuss how the analysis of Case I can be applied to Case II. Rewrite (\ref{e_Z}) as follows:

\begin{eqnarray}
v_i[t] & = & \frac{a_i}{2} v_i[t-1]  + \frac{a_i}{2} v_i[t-1]
	+ \sum_{j\in N_i^*[t] - \{i\}} a_iw_j \\
& = & a_iw_z + a_i w_i
	+ \sum_{j\in N_i^*[t] - \{i\}} a_iw_j 
\end{eqnarray}

In the above equation, $z$ is to be viewed as a ``virtual'' incoming
neighbor of node $i$, which has sent value $w_z=\frac{v_i[t-1]}{2}$
to node $i$ in iteration $t$.
With the above rewriting of state update,
the value received by node $i$ from itself should
be viewed as $w_i=\frac{v_i[t-1]}{2}$ instead of $v_i[t-1]$. 
With this transformation, Case II now becomes identical
to Case I, with virtual node $z$ being treated
as an incoming neighbor of node $i$.

In essence, a part of node $i$'s contribution (half, to be precise) is now replaced by equivalent contribution by nodes in $\scriptl_g$ and $\scripts_g$. 
We now define elements $\matrixm_{ij}[t]$ of row $\matrixm_i[t]$:

\begin{itemize}
\item For $j = i$: $\matrixm_{ij}[t] = \frac{a_i}{2}$. This is obtained by observing in (\ref{e_Z}) that node $i$'s contribution to the new state $v_i[t]$ is $a_i\frac{v_i[t-1]}{2}$.

\item For $j \in N_i^*[t] - \{i\}$ : In this case, $j$ is a fault-free incoming neighbor of $i$. For each such $j$, define $\matrixm_{ij}[t] = a_i$. This is obtained by observing in (\ref{e_Z}) that the contribution of node j to the new state $v_i[t]$ is $a_iw_j = a_i v_j[t - 1]$.

\item For $j\in \scripts_g \cup \scriptl_g$ : In this case, $j$ is a fault-free node in $\scripts$ or  $\scriptl$. 

For each $j \in \scripts_g$,
\[
\matrixm_{ij}[t] ~=~ \frac{a_i}{2} \, \frac{S_z}{|\scripts_g|}
\]
and for each node $j \in \scriptl_g$,
\[
\matrixm_{ij}[t] ~=~ \frac{a_i}{2} \, \frac{L_z}{|\scriptl_g|}
\]

where $S_z$ and $L_z$ are chosen such that $S_z + L_z = 1$ and $w_z = \frac{v_i[t-1]}{2} = \frac{S_z}{2} m_{\scripts} + \frac{L_z}{2} m_{\scriptl}$. Note that such $S_z$ and $L_z$ exist because by definition of $\scripts_g$ and $\scriptl_g$, $v_i[t-1] \geq w_j,~\forall j \in S_g$ and $v_i[t-1] \leq w_j,~\forall j \in L_g$. Then the two expressions above are obtained by  replacing the contribution of the virtual node $z$ by an equivalent contribution by the nodes in $\scripts_g$ and $\scriptl_g$, respectively.

\item For $j\in (\scriptv - F) - (N_i^*[t] \cup \scripts_g \cup \scriptl_g)$ :
 These fault-free nodes have not yet been considered above.
For each such node $j$, define $\matrixm_{ij}[t] = 0$.
\end{itemize}

By argument similar to that in Section \ref{s_construction}, $\matrixm[t]$ is row stochastic. Without loss of generality, suppose that $S_z \geq 1/2$. Then for each node $j \in \scripts_g$, $\matrixm_{ij}[t] = \frac{a_i}{2|\scripts_g|}S_z \geq \frac{a_i}{4|\scripts_g|}$. Also, for fault-free node $j$ in $N_i^*[t]-\{i\}$, $\matrixm_{ij}[t] = a_i$, and $\matrixm_{ii}[t] = \frac{a_i}{2}$. Recall that by definition, $|\scripts_g| \geq 1$. Hence, if $\beta$ is chosen such that 

\begin{equation}
\label{eq:beta-caseII}
0 < \beta \leq \frac{a_i}{4|\scripts_g|}
\end{equation}
and $F_x(i)$ is defined to be equal to $\scriptl$, then the condition in the Lemma \ref{lemma:tm2cm} holds for node $i$. That is,
$\matrixm_{ij}[t] \geq \beta$ for $j \in \{i\} \cup (\scriptv_{F} - F_x(i)) \cap N_i^-$. 

\subsection{Cases III and IV}

Now, we describe the construction of Case III. The construction
for Case IV is very similar, and thus, is omitted here. 

In Case III, $\scripts_g = \Phi, \scriptl_g \neq \Phi$,
and $N_i^*[t]\cap F\neq \Phi$.
Thus, $\scripts$ does not contain any fault-free nodes (hence $\scripts_g$
is empty).
This may be due to one of the following two reasons: (i) the set
$\scripts$ is non-empty, but all the nodes in $\scripts$ are faulty,
or (ii) set $\scripts$ is empty.


Assume that $l\in\scriptl$ is a fault-free node, and that all the
nodes in $\scripts$ are faulty (i.e., $\scripts_g=\Phi$) or that $\scripts$ is empty (i.e., $f_1 = 0$).
In this case, observe that node $D(f_1+1)$ must be fault-free (otherwise, $f_1$
cannot be the largest value as defined in Algorithm 1).
Now, consider any node $k \in N_i^*[t]$. Similar to the argument in Case I,
we can find weights $S_k \geq 0$ and $L_k \geq 0$ such that
\[
S_k + L_k = 1
\]
and
\begin{equation}
\label{eq:caseII}
w_k = S_k~v_{D(f_1+1)}[t-1] + L_k~v_{l}[t-1]
\end{equation}
We now define $\matrixm_{ij}[t]$ for all fault-free $j$.

\begin{itemize}
\item For $j\in (N_i^*[t] - \{D(f_1+1)\})\cap (\scriptv-F)$. That is,
$j$ is a fault-free node in $N_i^*[t]$ with the exception of $D(f_1+1)$.

For each such $j$, define $\matrixm_{ij}[t] = a_i$. This is obtained by observing in (\ref{e_Z}) that the contribution of node $j$ to the new state $v_i[t]$ is $a_iw_j = a_i~v_j[t-1]$.

The elements of $\matrixm_i[t]$ defined here (including the case of $j=i$) add up to $$(|N_i^*[t] \cap (\scriptv - F)|-1)~a_i.$$

\item For nodes $D(f_1+1)$ and $l$: Define

\[
\matrixm_{iD(f_1+1)}[t] = a_i + \sum_{k \in N_i^*[t] \cap F} a_i~S_k
\]

and

\[
\matrixm_{il}[t] = \sum_{k \in N_i^*[t] \cap F} a_i~L_k
\]

Similar to Case I presented in Section \ref{s_construction}, these two expressions are obtained by summing up the contribution over the faulty nodes in $N_i^*[t]$, and replacing the sum by an equivalent contribution
 by the nodes $D(f_1+1)$ and $l$, respectively, according to (\ref{eq:caseII}).

The above elements of $\matrixm_i[t]$ add up to $$a_i~\left(1+\sum_{k \in N_i^*[t] \cap F}(S_k + L_k)\right) = (1+|N_i^*[t] \cap F|)~a_i.$$

\item For $j\in (\scriptv - F) - (N_i^*[t] \cup \{l\})$:
These fault-free nodes have not yet been considered above. For each such
$j$, define $M_{ij}[t] = 0$.
\end{itemize}
Similar to Case I, in Case III as well,
it should be easy to see that
\[
\matrixm_i[t]\,v[t-1]
\]
is identical to $v_i[t]$ obtained using (\ref{e_Z}).

\paragraph{Properties of $\matrixm_i[t]$:}
All the elements of $\matrixm_i[t]$ are non-negative.
The elements of $\matrixm_i[t]$ defined in Case II add up to
\[
(|N_i^*[t] \cap (\scriptv - F)|-1)~a_i + (1+|N_i^*[t] \cap F|)~a_i = |N_i^*[t]|~a_i = 1
\]
Thus, $\matrixm_i[t]$ is a stochastic row vector.

In Case III, recall that for any fault-free node $j$ in $N_i^*[t]$
(including $j= D(f_1+1)$ and $j=i$), $\matrixm_{ij}[t] \geq a_i$.
Thus, if $\beta$ is chosen such that
\begin{equation}
\label{eq:beta-caseIII}
0 < \beta \leq a_i
\end{equation}
and $F_x(i)$ is defined to be equal to $\scriptl$,
then the condition in the
Lemma \ref{lemma:tm2cm} holds for node $i$. 

\subsection{Case V}

Consider Case V, where $N_i^*[t] \cap F \neq \Phi$, and $\scripts_g = \scriptl_g = \Phi$. In this case, it should be easy to see that $N_i^*[t]$ contains at least 3 nodes. In particular, $D_{f_1+1}$ must be fault-free (otherwise, $f_1$ cannot be maximum possible), $D_{|N_i^-|-f_2+1}$ must be fault-free (otherwise, $f_2$ cannot be maximum possible), and there is a faulty node in $N_i^*[t]$.

Now this case can be handled similar to Case III analyzed above. In particular, entries in $\matrixm_i[t]$ are defined similarly with $l$ being defined equal to $D_{N_i^--f_2+1}$.  Also, define $F_x(i) = \Phi$. 

Hence, it is easy to see that the properties of $\matrixm_i[t]$ are identical to Case III presented above.

\subsection{Case VI}

Here, we consider the case when at most one of $\scripts$ and $\scriptl$ contains a fault-free node and $N_i^*[t] \cap F = \Phi$. Without loss of generality, suppose that $\scripts$ contains only faulty nodes, and $\scriptl$ may contain a fault-free node.

In this case, define $\matrixm_{ij}[t]=a_i$ for $j\in N_i^*[t]$;
define $\matrixm_{ij}=0$ for all other fault-free nodes $j$.
Also, define $F_x(i)=\scriptl$.

The properties of $\matrixm_i[t]$ thus defined are identical to Case III above.

\section{Putting Cases Together}
\label{a_s_together}

Now, let us consider Cases I-VI together. From the definition of $a_i$ in Algorithm 1, observe that $a_i \geq \frac{1}{|N_i^-|+1}$ (because $f_1,f_2\geq 0$). Let us define
\[ \alpha = \min_{i\in\scriptv} \frac{1}{|N_i^-|+1}\]
Moreover, observe that $|\scripts_g| \leq n$ and $|\scriptl_g| \leq n$. Then define $\beta$ as 
\begin{equation}
\label{eq:beta}
\beta = \frac{\alpha}{4n}
\end{equation}
This definition satisfies constraints on $\beta$ in Cases I through VI (conditions (\ref{eq:beta_caseI}), (\ref{eq:beta-caseII}) and (\ref{eq:beta-caseIII})). Thus, Lemma \ref{lemma:tm2cm} holds for all six cases with this choice of $\beta$.

\section{Proof of Lemma \ref{lemma:non-zero} in Section \ref{s_proof}}
\label{a_s_non-zero}

Here, we present the proof of the first key lemma used in the sufficiency proof. \\

\noindent {\bf Lemma \ref{lemma:non-zero}  }
{\em For any $\graphh \in R_{F}, {\normalfont\bf \matrixh^{n-\psi}}$ has at least one non-zero column.}

\begin{proof}
$G(\scriptv,\scripte)$ satisfies the {\em sufficient condition}
stated at the start of Section \ref{s_sufficiency}. Therefore,
there exists at least one non-faulty node $k$ in the reduced graph $\graphh$ that has
directed paths to all the nodes in $\graphh$ (consisting of the edges in $\graphh$).
 Since the length of the path from $k$ to any other node in $\graphh$ is at most $n-\psi-1$, the $k$-th column of matrix $\matrixh^{n-\psi}$ will be non-zero.\footnote{That is, all the elements of the column will be non-zero. Also, such a non-zero column will exist in $\matrixh^{n-\psi-1}$, too. We use the loose bound of $n-\psi$ to simplify the presentation.}
\end{proof}

\section{Proof of Lemma \ref{lemma:cm} in Section \ref{s_proof}}
\label{a_s_cm}

Here, we present the proof of the second key lemma used in the sufficiency proof. We start with two definitions:

\begin{definition}
For matrices $\textbf{A}$ and $\textbf{B}$ of identical size, and a scalar $\gamma$, $\gamma \textbf{B} \leq \textbf{A}$ provided that $\gamma \textbf{B}_{ij} \leq \textbf{A}_{ij}$ for all $i, j$.
\end{definition}
We want to prove the following lemma.

~

\noindent {\bf Lemma \ref{lemma:cm}  }
{\em For any $t \geq 1$, there exists a graph $\graphh \in R_{F}$ such that $\beta {\normalfont\bf\matrixh \leq \matrixm}[t]$.}

\begin{proof}
Observe that the $i$-th row of the transition matrix $\matrixm[t]$ corresponds to
the state update (in Algorithm 1) performed at fault-free node $i$.
Recall from Lemma \ref{lemma:tm2cm} that $\matrixm_{ij}[t] \geq \beta$
for $j \in \{i\} \cup ((\scriptv_{F} - F_x(i)) \cap N_i^-)$, where
$F_x(i)$ is a feasible fault set.

Let us obtain a reduced graph $H$ by choosing $F_x(i)$ for each $i$ as defined
in Lemma \ref{lemma:tm2cm}.
Then from the definition of connectivity matrix $\matrixh$, Lemma \ref{lemma:cm} then follows.
\end{proof}

\section{Correctness of Algorithm 1}
\label{a_sufficiency}

When presenting matrix products, for convenience of presentation, we adopt
the following convention: for $a<b$, $\Pi_{i=a}^b \bfA[i]$ denotes
the ``backward'' product $\bfA[b]\bfA[b-1]\cdots\bfA[a]$.

The proof below is similar to a proof for the $f$-total fault model in our previous work \cite{vaidya_matrix_IABC}. It is included here for the
convenience of the referees.

\subsection{Matrix Preliminaries}

In the discussion below, we use boldface upper case letters to denote matrices,
rows of matrices, and their elements. For instance,
$\bfH$ denotes a matrix, $\bfH_i$ denotes the $i$-th row of
matrix $\bfH$, and $\bfH_{ij}$ denotes the element at the
intersection of the $i$-th row and the $j$-th column
of matrix $\bfH$.

\begin{definition}
\label{d_stochastic}
A vector is said to be {\em stochastic} if all the elements
of the vector are {\em non-negative}, and the elements add up to 1.
A matrix is said to be \underline{row} stochastic if each row of the matrix is a
stochastic vector. 
\end{definition}

For a row stochastic matrix $\bfA$,
 coefficients of ergodicity $\delta(\bfA)$ and $\lambda(\bfA)$ are defined as
follows \cite{Wolfowitz}:
\begin{eqnarray*}
\delta(\bfA) & = &   \max_j ~ \max_{i_1,i_2}~ | \bfA_{i_1\,j}-\bfA_{i_2\,j} | \label{e_zelta} \\
\lambda(\bfA) & = & 1 - \min_{i_1,i_2} \sum_j \min(\bfA_{i_1\,j} ~, \bfA_{i_2\,j}) \label{e_lambda}
\end{eqnarray*}
It is easy to show that  $0\leq \delta(\bfA) \leq 1$ and $0\leq \lambda(\bfA) \leq 1$, and that the rows
of $\bfA$ are all identical if and only if $\delta(\bfA)=0$. Also, $\lambda(\bfA) = 0$ if and only if $\delta(\bfA) = 0$.

The next result from \cite{Hajnal58} establishes a relation between the coefficient of ergodicity $\delta(\cdot)$ of a product of row stochastic matrices, and the coefficients of ergodicity $\lambda(\cdot)$ of the individual matrices defining the product. 

\begin{lemma}
\label{claim_zelta}
For any $p$ square row stochastic matrices $\bfQ(1),\bfQ(2),\dots \bfQ(p)$, 
\begin{eqnarray*}
\delta(\bfQ(p)\bfQ(p-1)\cdots \bfQ(1)) ~\leq ~
 \Pi_{i=1}^p ~ \lambda(\bfQ(i)).
\end{eqnarray*}
\end{lemma}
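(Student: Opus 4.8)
The plan is to prove the inequality by induction on $p$, reducing everything to a single-step ``submultiplicativity'' bound relating $\delta$ of a product to $\lambda$ of one factor and $\delta$ of the other. Before starting the induction I would record an equivalent, more convenient expression for $\lambda$. Since every row of a row stochastic matrix $\bfA$ sums to $1$, for any two rows $i_1,i_2$ the identity $\sum_j \min(\bfA_{i_1 j}, \bfA_{i_2 j}) = 1 - \tfrac12\sum_j |\bfA_{i_1 j}-\bfA_{i_2 j}|$ holds, and hence
\[
\lambda(\bfA) = \max_{i_1,i_2}\ \tfrac12 \sum_j \left|\bfA_{i_1 j}-\bfA_{i_2 j}\right|.
\]
This rewrites $\lambda(\bfA)$ as one half of the largest $\ell_1$ distance between two rows, which is the form best suited to the estimates below.

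The base case $p=1$ amounts to $\delta(\bfQ(1)) \le \lambda(\bfQ(1))$, and I would prove this directly: fixing the pair of rows $i_1,i_2$ that attains $\delta$, a single coordinate difference $|\bfA_{i_1 j}-\bfA_{i_2 j}|$ cannot exceed the sum of the positive parts (equivalently the sum of the negative parts) of the row difference, and because the two rows have equal total mass these two sums are equal, each being exactly $\tfrac12\sum_j|\bfA_{i_1 j}-\bfA_{i_2 j}|$. Hence $\delta(\bfA)\le\lambda(\bfA)$ for a single row stochastic matrix.

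The heart of the argument, and the step I expect to be the main obstacle, is the single-step bound $\delta(\bfA\bfB) \le \lambda(\bfA)\,\delta(\bfB)$ for any two row stochastic matrices $\bfA,\bfB$. To prove it, fix a column $k$ and two rows $i_1,i_2$, and set $c_l = \bfA_{i_1 l}-\bfA_{i_2 l}$, so that $(\bfA\bfB)_{i_1 k}-(\bfA\bfB)_{i_2 k} = \sum_l c_l\,\bfB_{lk}$ with $\sum_l c_l = 0$. Splitting $c_l$ into its positive and negative parts $c_l^+,c_l^-$, the vanishing of $\sum_l c_l$ forces $\sum_l c_l^+ = \sum_l c_l^- = s$, where $s = \tfrac12\sum_l|c_l| \le \lambda(\bfA)$ by the reformulation above. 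Bounding $\sum_l c_l^+\bfB_{lk} \le s\max_l \bfB_{lk}$ and $\sum_l c_l^-\bfB_{lk}\ge s\min_l\bfB_{lk}$ gives $\sum_l c_l\bfB_{lk}\le s(\max_l\bfB_{lk}-\min_l\bfB_{lk})$; the same estimate with the two rows interchanged supplies the reverse inequality, so that $|(\bfA\bfB)_{i_1 k}-(\bfA\bfB)_{i_2 k}| \le s(\max_l\bfB_{lk}-\min_l\bfB_{lk}) \le s\,\delta(\bfB) \le \lambda(\bfA)\,\delta(\bfB)$, where the middle step uses that the spread of column $k$ of $\bfB$ is at most $\delta(\bfB)$. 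Taking the maximum over $k,i_1,i_2$ yields the claimed bound.

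With these pieces in place the induction is immediate. Writing $P_k = \bfQ(k)\bfQ(k-1)\cdots\bfQ(1)$ and applying the single-step bound with $\bfA = \bfQ(k)$ and $\bfB = P_{k-1}$ gives $\delta(P_k)\le\lambda(\bfQ(k))\,\delta(P_{k-1})$; iterating down to $P_1 = \bfQ(1)$ and finishing with the base case $\delta(\bfQ(1))\le\lambda(\bfQ(1))$ produces $\delta(\bfQ(p)\bfQ(p-1)\cdots\bfQ(1)) \le \prod_{i=1}^p\lambda(\bfQ(i))$, as required. The only delicate point throughout is the positive/negative decomposition in the single-step bound; once that is set up correctly, the remainder is routine bookkeeping.
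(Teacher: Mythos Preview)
Your proof is correct. The reformulation of $\lambda$ as half the maximum $\ell_1$ distance between rows, the base case $\delta\le\lambda$, the single-step submultiplicativity $\delta(\bfA\bfB)\le\lambda(\bfA)\,\delta(\bfB)$ via the positive/negative decomposition of the row difference, and the final induction all go through exactly as you describe.

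For comparison, the paper does not actually prove this lemma at all: it simply cites it as a known result due to Hajnal~\cite{Hajnal58} and moves on. What you have written is essentially the classical argument behind that citation, so you have supplied a self-contained proof where the paper is content to invoke the literature. The benefit of your route is that the reader sees why the product of the $\lambda$'s controls $\delta$; the benefit of the paper's route is brevity, since the result is standard in the ergodic-coefficient literature on products of stochastic matrices.
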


Lemma \ref{claim_zelta} is proved in \cite{Hajnal58}. It implies that
if, for all $i$, $\lambda(\bfQ(i))\leq 1-\gamma$ for some $\gamma$, where $0<\gamma\leq 1$, then $\delta(\bfQ(p)\bfQ(p-1)\cdots \bfQ(1))$ will approach zero as $p$ approaches $\infty$. 
We now define a {\em scrambling} matrix \cite{Hajnal58,Wolfowitz}. 

\begin{definition}
A row stochastic
 matrix $\bfH$ is said to be a {\em scrambling}\, matrix if $\lambda(\bfH)<1$.
\end{definition}

The following lemma follows easily from the above definition of $\lambda(\cdotp)$. 
\begin{lemma}
\label{l_lambda_bound}
If any column of a row stochastic matrix $\bfH$
contains only non-zero elements that are all lower bounded by some
constant $\gamma$, where $0<\gamma\leq 1$, then $\bfH$ is a scrambling matrix, and $\lambda(\bfH)\leq 1-\gamma$. 
\end{lemma}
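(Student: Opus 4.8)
The final statement to prove is Lemma~\ref{l_lambda_bound}: if some column of a row stochastic matrix $\bfH$ has all its nonzero entries bounded below by $\gamma \in (0,1]$, then $\bfH$ is scrambling and $\lambda(\bfH) \leq 1-\gamma$. My plan is to attack this directly from the definition of $\lambda(\cdot)$, namely $\lambda(\bfH) = 1 - \min_{i_1,i_2} \sum_j \min(\bfH_{i_1 j}, \bfH_{i_2 j})$. To upper-bound $\lambda(\bfH)$ by $1-\gamma$, it suffices to lower-bound $\min_{i_1,i_2} \sum_j \min(\bfH_{i_1 j}, \bfH_{i_2 j})$ by $\gamma$.

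The wording of the hypothesis deserves a careful reading, and identifying its exact meaning is the one genuine subtlety here. The phrase ``contains only non-zero elements that are all lower bounded by $\gamma$'' I would interpret as: there is a column index $c$ such that \emph{every} entry $\bfH_{ic}$ in that column is nonzero and satisfies $\bfH_{ic} \geq \gamma$. (That is, the column has no zero entries at all; every entry is at least $\gamma$.) This is exactly how the lemma gets applied via Lemmas~\ref{lemma:non-zero} and~\ref{lemma:cm}: Lemma~\ref{lemma:non-zero} produces a column of $\matrixh^{n-\psi}$ that is entirely nonzero, and combined with the entrywise bound $\beta\matrixh \leq \matrixm[t]$ one gets a fully positive column bounded below by a constant. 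So I would state the interpretation explicitly and proceed with it.

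The key computation is then short. Fix the distinguished column $c$. For any pair of rows $i_1, i_2$, the sum $\sum_j \min(\bfH_{i_1 j}, \bfH_{i_2 j})$ is a sum of nonnegative terms, so it is bounded below by the single term $\min(\bfH_{i_1 c}, \bfH_{i_2 c})$. Since both $\bfH_{i_1 c} \geq \gamma$ and $\bfH_{i_2 c} \geq \gamma$ by hypothesis, this minimum is at least $\gamma$. As this holds for every pair $(i_1, i_2)$, taking the minimum over all pairs preserves the bound, giving $\min_{i_1,i_2}\sum_j \min(\bfH_{i_1 j}, \bfH_{i_2 j}) \geq \gamma$. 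Substituting into the definition yields $\lambda(\bfH) \leq 1 - \gamma$. Finally, since $\gamma > 0$ we have $\lambda(\bfH) < 1$, so $\bfH$ is scrambling by definition, completing the proof.

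There is no real obstacle to speak of — the result is essentially immediate once the definition of $\lambda$ is unpacked and the hypothesis is read correctly. The only place requiring care is the reading of ``contains only non-zero elements,'' since a sloppy reading (e.g., ``the nonzero elements among possibly some zero ones are bounded below'') would be insufficient: if the column had a zero entry, then for a pair of rows where one has a zero in column $c$, the term $\min(\cdot,\cdot)$ could vanish and the bound would fail. So the load-bearing fact is that the entire column is positive, which is precisely what the application context guarantees. I would make this interpretation explicit in the opening sentence of the proof to avoid ambiguity.
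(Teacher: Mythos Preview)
Your proof is correct and matches the paper's approach: the paper does not give an explicit proof of this lemma, stating only that it ``follows easily from the above definition of $\lambda(\cdot)$,'' and your argument is exactly the natural unpacking of that definition. Your care about the interpretation of ``contains only non-zero elements'' is well placed and consistent with how the lemma is later applied in the proof of Lemma~\ref{l_Q}.
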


\subsection{Correctness of Algorithm 1}

\begin{lemma}
\label{l_product_H}
For any $z\geq 1$,
in the product below of $\bfH[t]$ matrices for consecutive
$\tau(n-\psi)$ iterations, at least one column is non-zero. 
\[
\Pi_{t=z}^{z+\tau(n-\psi)-1} \, \bfH[t]
\]
\end{lemma}
\begin{proof}
Since the above product consists of $\tau(n-\psi)$ connectivity matrices
corresponding to graphs
in $R_\scriptf$,
at least one of the connectivity matrices
corresponding to the $\tau$ distinct graphs
in $R_\scriptf$, say matrix $\bfH_*$\,, will appear in the above
product at least $n-\psi$ times.

Now observe that: (i)
By Lemma \ref{lemma:non-zero}, $\bfH_*^{n-\psi}$ contains a non-zero
column, say the $k$-th column is non-zero,
and (ii) all the $\bfH[t]$ matrices in the product contain a non-zero diagonal.
These two observations together imply that the $k$-th column in the above product 
is non-zero.
\end{proof}

~

Let us now define a sequence of matrices $\bfQ(i)$, $i\geq 1$, such that
each of these matrices is a product of $\tau(n-\psi)$ of the
$\bfM[t]$ matrices. Specifically,
\begin{eqnarray}
\bfQ(i) &=& \Pi_{t=(i-1)\tau(n-\psi)+1}^{i\tau(n-\psi)} ~ \bfM[t]
\label{e_Q_i}
\end{eqnarray}
From (\ref{e_v_t}) and (\ref{e_Q_i})
observe that
\begin{eqnarray}
\bfv[k\tau(n-\psi)] & = & \left(\, \Pi_{i=1}^k ~ \bfQ(i) \,\right)~\bfv[0]
\end{eqnarray}

\begin{lemma}
\label{l_Q}
For $i\geq 1$, $\bfQ(i)$ is a scrambling row stochastic matrix,
and \[ \lambda(\bfQ(i))\leq 1-\beta^{\tau(n-\psi)}.\]
\end{lemma}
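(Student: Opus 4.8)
The plan is to prove Lemma~\ref{l_Q} by combining the two key lemmas already established, namely Lemma~\ref{lemma:cm} (which lower-bounds $\matrixm[t]$ by $\beta\bfH[t]$ for a suitable reduced graph) and Lemma~\ref{l_product_H} (which guarantees a non-zero column in any product of $\tau(n-\psi)$ connectivity matrices). First I would observe that $\bfQ(i)$ is a product of $\tau(n-\psi)$ row stochastic matrices $\bfM[t]$, and since the product of row stochastic matrices is row stochastic, $\bfQ(i)$ is itself row stochastic. The substance of the lemma is the bound on $\lambda(\bfQ(i))$, and by Lemma~\ref{l_lambda_bound} it suffices to exhibit a single column of $\bfQ(i)$ all of whose entries are non-zero and bounded below by $\beta^{\tau(n-\psi)}$; this would simultaneously establish that $\bfQ(i)$ is scrambling.

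The central step is to propagate the entrywise bound $\beta\bfH[t]\leq\matrixm[t]$ through the matrix product. For each $t$ in the relevant range, Lemma~\ref{lemma:cm} supplies a reduced graph $H[t]\in R_F$ with $\beta\bfH[t]\leq\matrixm[t]$ entrywise. Because all matrices involved have non-negative entries, entrywise inequalities are preserved under multiplication: if $\bfA\geq\mathbf{0}$, $\bfB\geq\mathbf{0}$, and $\beta\bfH'\leq\bfA$, $\beta\bfH''\leq\bfB$, then $\beta^2\bfH'\bfH''\leq\bfA\bfB$. Iterating this observation over the $\tau(n-\psi)$ factors defining $\bfQ(i)$ yields
\begin{equation}
\beta^{\tau(n-\psi)}\left(\Pi_{t}\bfH[t]\right)\leq\bfQ(i),
\end{equation}
where the product of connectivity matrices ranges over the same consecutive iterations. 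By Lemma~\ref{l_product_H}, this product $\Pi_{t}\bfH[t]$ has at least one non-zero column, say column $k$; since connectivity matrices have $\{0,1\}$ entries, every non-zero entry in that column is at least $1$. Hence the $k$-th column of $\bfQ(i)$ has every entry bounded below by $\beta^{\tau(n-\psi)}$.

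Applying Lemma~\ref{l_lambda_bound} with $\gamma=\beta^{\tau(n-\psi)}$ (which lies in $(0,1]$ since $0<\beta\leq 1$) then gives both that $\bfQ(i)$ is scrambling and the desired bound $\lambda(\bfQ(i))\leq 1-\beta^{\tau(n-\psi)}$, completing the proof. The main obstacle I anticipate is the careful bookkeeping that entrywise multiplicative bounds are genuinely preserved across the product: one must verify that the $\bfH[t]$ chosen for different iterations (which may correspond to different reduced graphs) still combine correctly, and that the non-zero column guaranteed by Lemma~\ref{l_product_H} is indexed consistently with the same iteration range used in defining $\bfQ(i)$. This is really an alignment-of-indices issue rather than a deep difficulty, since the monotonicity of multiplication for non-negative matrices does all the analytic work.
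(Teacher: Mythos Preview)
Your proposal is correct and follows essentially the same route as the paper's proof: row-stochasticity by closure under products, propagation of the entrywise bound $\beta\bfH[t]\leq\matrixm[t]$ through the product to obtain $\beta^{\tau(n-\psi)}\Pi_t\bfH[t]\leq\bfQ(i)$, invocation of Lemma~\ref{l_product_H} for a non-zero column, and finally Lemma~\ref{l_lambda_bound}. Your explicit justification of monotonicity of multiplication for non-negative matrices is a welcome addition over the paper, which takes that step for granted.
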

\begin{proof}

$\bfQ(i)$ is a product of row stochastic matrices ($\bfM[t]$); therefore,
$\bfQ(i)$ is row stochastic.
From Lemma \ref{lemma:cm}, for each $t\geq 1$,
\[
\beta \, \bfH[t] ~ \leq ~ \bfM[t]
\]
Therefore, 
\[
\beta^{\tau(n-\psi)} ~ \Pi_{t=(i-1)\tau(n-\psi)+1}^{i\tau(n-\psi)} ~ \bfH[t] ~ \leq 
~ \Pi_{t=(i-1)\tau(n-\psi)+1}^{i\tau(n-\psi)} ~ \bfM[t] ~ =
~ \bfQ(i)
\]
By using $z=(i-1)(n-\psi)+1$ in Lemma \ref{l_product_H},
we conclude that the matrix product on the left side
of the above inequality contains a non-zero column. Therefore, $\bfQ(i)$ on the
right side of the inequality also contains
a non-zero column.

Observe that $\tau(n-\psi)$ is finite, and hence, $\beta^{\tau(n-\psi)}$
is non-zero. Since the non-zero terms in $\bfH[t]$ matrices are all 1,
the non-zero elements in $\Pi_{t=(i-1)\tau(n-\psi)+1}^{i\tau(n-\psi)} \bfH[t]$
must each be $\geq$ 1. Therefore, there exists a non-zero column in $\bfQ(i)$
with all the elements in the column being $\geq \beta^{\tau(n-\psi)}$.
Therefore, by Lemma \ref{l_lambda_bound}, $\lambda(\bfQ(i))\leq 1-\beta^{\tau(n-\psi)}$, and $\bfQ(i)$ is a scrambling matrix.
\end{proof}

~

\noindent {\bf Theorem \ref{thm:sufficiency}  }
{\em Suppose that $G(\scriptv, \scripte)$ satisfies the sufficient condition stated above. Algorithm 1 satisfies both the validity and convergence conditions.}

\begin{proof}

Since $\bfv[t]=\bfM[t]\,v[t-1]$, and $\bfM[t]$ is a row stochastic matrix,
it follows that
Algorithm 1 satisfies the validity condition.

Using Lemma \ref{claim_zelta} and the definition of $\bfQ(i)$, 
and using the inequalities
$\lambda(\bfM[t])\leq 1$ and $\lambda(\bfQ(i))\leq (1-\beta^{\tau(n-\psi)})<1$, we get
\begin{eqnarray*}
\lim_{t\rightarrow \infty} \delta(\Pi_{i=1}^t \bfM[i])
~=~ \lim_{t\rightarrow\infty}  \delta\left(
\left(\Pi_{i=(\lfloor\frac{t}{\tau(n-\psi)}\rfloor)\tau(n-\psi)+1}^t \bfM[i]\right)
\left(\Pi_{i=1}^{\lfloor\frac{t}{\tau(n-\psi)}\rfloor} \bfQ(i)\right)\right)
 \\ 
~ \leq ~ \lim_{t\rightarrow\infty} \Pi_{i=1}^{\lfloor\frac{t}{\tau(n-\psi)}\rfloor} \lambda(\bfQ(i)) 
~ = ~ 0 
\end{eqnarray*}
Thus, the rows of $\Pi_{i=1}^t \bfM[i]$ become identical in the limit.
This observation, and the fact that $\bfv[t]=(\Pi_{i=1}^t \bfM[i])\bfv[0]$ together imply that
the states of the fault-free nodes satisfy the
convergence condition.
\end{proof}


\begin{thebibliography}{10}

\bibitem{AA_convergence_markov}
D.~P. Bertsekas and J.~N. Tsitsiklis.
\newblock {\em Parallel and Distributed Computation: Numerical Methods}.
\newblock Optimization and Neural Computation Series. Athena Scientific, 1997.

\bibitem{Vartika_radio_byzantine_2005}
V.~Bhandari and N.~H. Vaidya.
\newblock On reliable broadcast in a radio network.
\newblock In {\em Proceedings of the twenty-fourth annual ACM symposium on
  Principles of distributed computing}, PODC '05, pages 138--147, New York, NY,
  USA, 2005. ACM.

\bibitem{dag_decomposition}
S.~Dasgupta, C.~Papadimitriou, and U.~Vazirani.
\newblock {\em Algorithms}.
\newblock McGraw-Hill Higher Education, 2006.

\bibitem{AA_Dolev_1986}
D.~Dolev, N.~A. Lynch, S.~S. Pinter, E.~W. Stark, and W.~E. Weihl.
\newblock Reaching approximate agreement in the presence of faults.
\newblock {\em J. ACM}, 33:499--516, May 1986.

\bibitem{FLP_one_crash}
M.~J. Fischer, N.~A. Lynch, and M.~S. Paterson.
\newblock Impossibility of distributed consensus with one faulty process.
\newblock {\em J. ACM}, 32:374--382, April 1985.

\bibitem{Hajnal58}
J.~Hajnal.
\newblock Weak ergodicity in non-homogeneous markov chains.
\newblock In {\em Proceedings of the Cambridge Philosophical Society},
  volume~54, pages 233--246, 1958.

\bibitem{AA_PCN_Local}
R.~M. Kieckhafer and M.~H. Azadmanesh.
\newblock Low cost approximate agreement in partially connected networks.
\newblock {\em Journal of Computing and Information}, 3(1):53--85, 1993.

\bibitem{Koo_radio_byzantine}
C.-Y. Koo.
\newblock Broadcast in radio networks tolerating byzantine adversarial
  behavior.
\newblock In {\em Proceedings of the twenty-third annual ACM symposium on
  Principles of distributed computing}, PODC '04, pages 275--282, New York, NY,
  USA, 2004. ACM.

\bibitem{Nonuniform_failure_models}
P.~Kuznetsov.
\newblock Understanding non-uniform failure models.
\newblock {\em Bulletin of the European Association for Theoretical Computer
  Science (BEATCS)}, 106:53--77, 2012.

\bibitem{psl_BG_1982}
L.~Lamport, R.~Shostak, and M.~Pease.
\newblock The byzantine generals problem.
\newblock {\em ACM Trans. on Programming Languages and Systems}, 1982.

\bibitem{leblanc_HiCoNs}
H.~LeBlanc, H.~Zhang, S.~Sundaram, and X.~Koutsoukos.
\newblock Consensus of multi-agent networks in the presence of adversaries
  using only local information.
\newblock {\em HiCoNs}, 2012.

\bibitem{AA_nancy}
N.~A. Lynch.
\newblock {\em Distributed Algorithms}.
\newblock Morgan Kaufmann, 1996.

\bibitem{vaidya_matrix_IABC}
N.~H. Vaidya.
\newblock Matrix representation of iterative approximate byzantine consensus in
  directed graphs.
\newblock {\em CoRR}, Mar. 2012.

\bibitem{vaidya_PODC12}
N.~H. Vaidya, L.~Tseng, and G.~Liang.
\newblock Iterative approximate byzantine consensus in arbitrary directed
  graphs.
\newblock volume abs/1201.4183, 2012.

\bibitem{Wolfowitz}
J.~Wolfowitz.
\newblock Products of indecomposable, aperiodic, stochastic matrices.
\newblock In {\em Proceedings of the American Mathematical Society}, volume~14,
  pages 733--737, 1963.

\bibitem{IBA_broadcast_Sundaram}
H.~Zhang and S.~Sundaram.
\newblock Robustness of information diffusion algorithms to locally bounded
  adversaries.
\newblock {\em CoRR}, abs/1110.3843, 2011.

\end{thebibliography}
\end{document}